\theoremstyle{definition}
\newtheorem{technique}[theorem]{Technique}
\newtheorem{encoding}[theorem]{Remark}
\newcommand{\tuple}[1]{\langle\,#1\,\rangle}
\newcommand{\ter}[2]{T(#1,#2)}
\newcommand{\vars}{\mathcal{X}}
\newcommand{\varsof}[1]{\mathit{Var}(#1)}
\newcommand{\pos}[1]{\mathit{Pos}(#1)}
\newcommand{\pow}[1]{\mathcal{P}(#1)}
\newcommand{\lang}[1]{\mathcal{L}(#1)}
\renewcommand{\emptyset}{\varnothing}
\newcommand{\nat}{\mathbb{N}}
\newcommand{\arity}[1]{\#(#1)}
\newcommand{\sap}{a}
\newcommand{\ap}[2]{\sap(#1,#2)}
\newcommand{\combS}{\mathit{S}}
\newcommand{\rmap}{\xi}
\newcommand{\ared}{\rightsquigarrow}
\newcommand{\nb}{\nobreakdash}
\newcommand{\powerset}[1]{\wp(#1)}
\newcommand{\product}{\times}
\newcommand{\SN}{\mbox{\sf SN}}
\newcommand{\WN}{\mbox{\sf WN}}
\newcommand{\NF}{\mbox{\sf NF}}
\title{Proving Looping and Non-Looping Non-Termination by Finite Automata\footnote{%
    This is an extended version of the paper~\cite{endr:zant:2015} published at RTA 2015.
    This extension includes 
    transformation for strengthening the presented non-termination techniques (see Remarks~\ref{rem:constant} and~\ref{rem:collapsing}), 
    a detailed description of the SAT encoding of the improved technique in Section~\ref{secimp} (see Remark~\ref{sat:improved})
    and a remark on the completeness of the method with respect to loops (see Remark~\ref{rem:loops}).
  }
}
\author[1]{J\"{o}rg~Endrullis}
\author[2,3]{Hans~Zantema}
\affil[1]{
  Department of Computer Science, VU University Amsterdam,\\ 
  1081 HV Amsterdam, The Netherlands,
  email: {\tt j.endrullis@vu.nl}}
\affil[2]{
  Department of Computer Science, TU Eindhoven, P.O.\ Box 513,\\
  5600 MB Eindhoven, The Netherlands,
  email: {\tt H.Zantema@tue.nl}}
\affil[3]{
  Institute for Computing and Information Sciences, Radboud University \\
  Nijmegen, P.O.\ Box 9010, 6500 GL Nijmegen, The Netherlands
}  
\authorrunning{J. Endrullis and H. Zantema}
\subjclass{D.1.1, D.3.1, F.4.1, F.4.2, I.1.1, I.1.3}
\keywords{non-termination, finite automata, regular languages}
\begin{document}

\maketitle

\begin{abstract}
A new technique is presented to prove non-termination of term rewriting. The basic idea is to find a non-empty 
regular language of terms that is closed under rewriting and does not contain normal forms. It is automated by
representing the language by a tree automaton with a fixed number of states, and expressing the mentioned requirements
in a SAT formula. Satisfiability of this formula implies non-termination. 
Our approach succeeds for many examples where all earlier techniques fail, for instance for the $S$-rule from combinatory
logic.
\end{abstract}

\section{Introduction}

A basic approach for proving that a term rewriting system (TRS) is non-terminating is to prove that it admits a {\em
loop}, that is, a reduction of the shape $t \to^+ C[t \sigma]$, see
\cite{gese:zant:1999}. Indeed, such a loop gives rise to an infinite
reduction $t \to^+ C[t \sigma] \to^+ C[(C[t \sigma]) \sigma] \to \cdots$ in which in every step $t$ is replaced by $C[t
\sigma]$. In trying to prove non-termination, several tools (\cite{aprove,ttt2}) search for a loop. An extension from
\cite{emme:enge:gies:2012}, implemented in \cite{aprove} goes a step further: it searches for
reductions of the shape  
$t \sigma^n \mu \to^+ C[t \sigma^{f(n)} \mu \tau]$ for every $n$ for a linear increasing function $f$, and some extensions.
All of these patterns are chosen to be extended to an infinite reduction in an 
obvious way, hence proving non-termination. However, many non-terminating TRSs exist not admitting an infinite reduction 
of this regular shape, or the technique from  \cite{emme:enge:gies:2012} fails to find it. 

A crucial example is the $S$-rule $a(a(a(S, x), y), z) \to a(a(x, z), a(y, z))$, one of the building blocks of
Combinatory Logic. Although being only one single rule, and having nice properties like orthogonality, non-termination
of this system is a hard issue. Infinite reductions are known, but are of a complicated shape, see \cite{wald:2000}. So developing a
general technique that can prove non-termination of the $S$-rule automatically is a great challenge. In this paper we succeed in
presenting such a technique, and we describe a SAT-based approach by which non-termination of many TRSs, including 
the $S$-rule, is proved fully automatically.

The underlying idea is quite simple: non-termination immediately follows from the existence of a non-empty set of 
terms that is closed under rewriting and does not contain normal forms. Our approach is to find such a set being the
language accepted by a finite tree automaton, and find this tree automaton from the satisfying assignment of a SAT
formula describing the above requirements. Hence the goal is to describe the requirements, namely non-emptiness, closed
under rewriting, and not containing normal forms, in a SAT formula. 

We want to stress that having quick methods for proving non-termination of term rewriting also may be fruitful for proving
termination. In a typical search for a termination proof, like using the dependency pair framework, the original  
problem is transformed in several ways to other termination problems that not all need to be terminating. Being able to
quickly recognize non-termination of some of them makes a further search for termination proofs redundant, which may speed
up the overall search for a termination proof.

We note that, like termination, non-termination is an undecidable property.
However, while termination is $\Pi^0_2$-complete,
non-termination is $\Sigma^0_2$-complete~\cite{endr:geuv:zant:2009,endr:geuv:simo:zant:2011}.

The paper is organized as follows. In Section \ref{secar} we present our basic approach in
the setting of abstract reduction systems on a set $T$, in which the language is just a
subset of $T$. Surprisingly, being not weakly normalizing corresponds to (strongly) closed
under rewriting, and being not strongly normalizing corresponds to weakly closed under
rewriting. In Section \ref{secta} we give preliminaries on tree automata and show how string
automata can be seen as an instance of tree automata. In Section \ref{secbm} we present our
basic methods, starting by how the requirements are expressed in SAT, and next how this is
used to disprove weak normalization and strong normalization. In Section \ref{secimp} we
strengthen our approach by labeling the states of the tree automata by sets of rewrite rules 
and exploiting this in the method. In Section \ref{secres} we present experimental results of
our implementation. We conclude in Section \ref{secconc}.

\paragraph*{Related Work}

The paper~\cite{gese:zant:1999} introduces the notion of loops and investigates necessary conditions for the existence of
them.  The work~\cite{zank:midd:2007} employs SAT solvers to find loops, 
\cite{zank:ster:hofb:midd:2010} uses forward closures to find loops efficiently, and
\cite{wald:2012} introduces `compressed loops' to find certain forms of very long loops. 
Non-termination beyond loops has been investigated in~\cite{oppe:2008} and~\cite{emme:enge:gies:2012}. There
the basic idea is the search for a particular generalization of loops, like a term $t$ and substitutions $\sigma, \tau$
such that for every $n$ there exist $C, \mu$ such that $t \sigma^n \tau$ rewrites to $C[t \sigma^{f(n)} \tau \mu]$, for some
ascending linear function $f$. Although the $S$-rule admits such reductions, these techniques fail to find them.
For other examples for which not even reductions exist of the shape studied in ~\cite{oppe:2008}
and~\cite{emme:enge:gies:2012}, we will be able to prove non-termination fully automatically.

Our approach can be summarized as searching for non-termination proofs based on regular (tree) automata.
Regular (tree) automata have been fruitfully applied to a wide rage of properties of term rewriting systems: 
for proving termination~\cite{gese:hofb:wald:zant:2007,endr:hofb:wald:2006,korp:midd:2009},
infinitary normalization~\cite{endr:grab:hend:klop:vrij:2009},
liveness~\cite{mous:lada:zant:2010},
and for analyzing reachability and deciding the existence of common reducts~\cite{felg:thie:2014,endr:grab:klop:oost:2011}.
Local termination on regular languages, has been investigated in~\cite{endr:vrij:wald:2010}.

\section{Abstract Rewriting}
\label{secar}

An {\em abstract reduction system} (ARS) is a binary relation $\to$ on a set $T$.
We write 
$\to^+$ for the transitive closure, and
$\to^*$ for the reflexive, transitive closure of $\to$.

Let $\to$ be an ARS on $T$.
The ARS $\to$ is called {\em terminating} or {\em strongly normalizing} ($\SN$) if no infinite sequence
$t_0,t_1,t_2,\ldots \in T$ exists such that $t_i \to t_{i+1}$ for all $i \geq 0$.
A {\em normal form} with respect to $\to$ is an element $t \in T$ such that no $u \in T$ exists satisfying $t \to u$.  
The set of all normal forms with respect to $\to$ is denoted by $\NF(\to)$. 
The ARS $\to$ is called {\em weakly normalizing} ($\WN$) if for every $t \in T$ a normal form $u \in T$ 
exists such that $t \to^* u$.

\begin{definition}
  A set $L \subseteq T$ is called
  \begin{itemize}
    \item {\em closed under} $\to$ if for all $t \in L$ and all $u \in T$ satisfying $t \to u$ it holds $u \in L$, and
    \item {\em weakly closed under} $\to$ if for all $t \in L \setminus \NF(\to)$ there exists $u \in L$ such that $t \to u$.
  \end{itemize}
\end{definition}

It is straightforward from these definitions that 
if $L$ is closed under $\to$, then $L$ is weakly closed under $\to$ as well.
The following theorems relate these notions to $\SN$ and $\WN$.

\begin{theorem}\label{thm:sn}
  An ARS $\to$ on $T$ is not $\SN$ if and only if a non-empty $L \subseteq T$ exists such that 
  $L \cap \NF(\to) = \emptyset$ and $L$ is weakly closed under $\to^+$.
\end{theorem}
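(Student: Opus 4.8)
The plan is to prove both directions of the equivalence directly from the definitions.

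For the ``if'' direction, suppose such a non-empty $L \subseteq T$ exists with $L \cap \NF(\to) = \emptyset$ and $L$ weakly closed under $\to^+$. Pick any $t_0 \in L$. Since $L \cap \NF(\to) = \emptyset$, we have $t_0 \notin \NF(\to)$, and also $t_0 \notin \NF(\to^+)$ (a term is a normal form for $\to$ iff it is a normal form for $\to^+$). Hence weak closedness of $L$ under $\to^+$ yields some $t_1 \in L$ with $t_0 \to^+ t_1$. Repeating this argument, I build an infinite sequence $t_0, t_1, t_2, \ldots \in L$ with $t_i \to^+ t_{i+1}$ for all $i$. Expanding each $\to^+$ step into one or more $\to$ steps gives an infinite $\to$-reduction, so $\to$ is not $\SN$.

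For the ``only if'' direction, suppose $\to$ is not $\SN$, so there is an infinite sequence $s_0 \to s_1 \to s_2 \to \cdots$. Take $L = \{\, s_i \mid i \geq 0 \,\}$. This $L$ is non-empty. No $s_i$ is a normal form, since $s_i \to s_{i+1}$, so $L \cap \NF(\to) = \emptyset$. Finally, $L$ is weakly closed under $\to^+$: for any $s_i \in L$ (and none lies in $\NF(\to^+)$ anyway), we have $s_i \to^+ s_{i+1}$ with $s_{i+1} \in L$, so the required successor exists. Thus $L$ witnesses the right-hand side.

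Neither direction presents a real obstacle; the argument is a routine unwinding of the definitions, the only subtle point being the observation that $\NF(\to) = \NF(\to^+)$, which is needed so that ``$L$ contains no $\to$-normal form'' is exactly the hypothesis that makes weak closedness under $\to^+$ usable. In the ``if'' direction I also implicitly use dependent choice to extract the infinite sequence, which is standard in this setting.
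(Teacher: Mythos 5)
Your proof is correct and follows essentially the same route as the paper: extract the infinite sequence from non-$\SN$ for the forward witness, and iterate weak closedness (via dependent choice) for the converse. Your explicit remark that $\NF(\to) = \NF(\to^+)$ is a nice touch the paper leaves implicit, but it does not change the argument.
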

\begin{proof}
  If $\to$ is not $\SN$ then an infinite sequence
  $t_0,t_1,t_2,\ldots \in T$ exists such that $t_i \to t_{i+1}$ for all $i \geq 0$. Then $L = \{t_i \mid i \geq 0 \}$
  satisfies the required properties.
  
  Conversely, assume $L$ satisfies the given properties. Since $L$ is non-empty we can choose $t_0 \in L$, and using the
  other properties for $i = 0,1,2,3,\ldots$ we can choose $t_{i+1} \in L$ such that $t_i \to^+ t_{i+1}$, proving that 
  $\to$ is not $\SN$.
\end{proof}  

\begin{theorem}\label{thm:wn}
  An ARS $\to$ on $T$ is not $\WN$ if and only if a non-empty $L \subseteq T$ exists such that 
  $L \cap \NF(\to) = \emptyset$ and $L$ is closed under $\to$.
\end{theorem}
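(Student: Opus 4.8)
The plan is to prove the two directions of the biconditional separately, mirroring the structure of the proof of Theorem~\ref{thm:sn} but using the stronger closure hypothesis to handle the quantifier over \emph{all} successors rather than just one.

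For the forward direction, assume $\to$ is not $\WN$. Then there exists some $t_0 \in T$ that has no normal form reachable by $\to^*$. The natural candidate is $L = \{u \in T \mid t_0 \to^* u\}$, the set of all reducts of $t_0$. This set is non-empty since $t_0 \in L$. It is closed under $\to$ because $\to^*$ is transitive: if $u \in L$ and $u \to v$ then $t_0 \to^* u \to v$, so $t_0 \to^* v$ and $v \in L$. Finally, $L \cap \NF(\to) = \emptyset$: if some $u \in L$ were a normal form, then $t_0 \to^* u$ would exhibit a normal form reachable from $t_0$, contradicting the choice of $t_0$. Hence $L$ has the required properties.

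For the converse, assume $L$ is non-empty, closed under $\to$, and disjoint from $\NF(\to)$. Pick any $t \in L$; I claim $t$ has no normal form reachable from it, which shows $\to$ is not $\WN$. Indeed, an easy induction on the length of a reduction shows that every reduct of an element of $L$ again lies in $L$: the base case is immediate, and the inductive step is exactly the closure hypothesis. Therefore any $u$ with $t \to^* u$ satisfies $u \in L$, and since $L \cap \NF(\to) = \emptyset$, such a $u$ cannot be a normal form. So no normal form is reachable from $t$, and $\to$ is not $\WN$.

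I do not expect any genuine obstacle here; the only point requiring a word of care is the converse direction, where one must observe that closure under a single step $\to$ propagates to closure under $\to^*$ via induction on reduction length — this is precisely the asymmetry with Theorem~\ref{thm:sn}, where weak closure under $\to^+$ (rather than $\to$) was needed so that the chosen infinite sequence could be built one element at a time. Here the ``for all successors'' form of ordinary closure is exactly what lets us conclude a statement about \emph{every} maximal reduction from $t$, matching the universal quantifier hidden in the definition of $\WN$.
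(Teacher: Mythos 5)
Your proof is correct and takes essentially the same route as the paper's: the same witness $L = \{u \in T \mid t_0 \to^* u\}$ in the forward direction, and in the converse the same observation that one-step closure propagates along any finite reduction to a normal form (which the paper phrases as a contradiction argument rather than your direct induction). No gaps.
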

\begin{proof}
  If $\to$ is not $\WN$ then $t \in T$ exists such that $L \cap \NF(\to) = \emptyset$ for $L = \{u \in T \mid t \to^* u\}$. 
  Then $L$ satisfies the required properties.
  
  Conversely, assume $L$ satisfies the given properties. Since $L$ is non-empty we can choose $t_0 \in L$. 
  Assume that $\to$ is $\WN$, then  $t_0 \to t_1 \to \cdots \to t_n$ exists such that $t_n \in \NF(\to)$. Since
  $L$ is closed under $\to$ we obtain $t_i \in L$ for $i = 1,2,\ldots,n$, contradicting $L \cap \NF(\to) = \emptyset$.
\end{proof}

A variant of Theorem~\ref{thm:sn}, where $\to^+$ is replaced by $\to$, has been observed in~\cite{cook}.
To the best knowledge of the authors, Theorem~\ref{thm:wn} has not been observed in the literature.

\section{Tree Automata}
\label{secta}

\begin{definition}\label{def:nfa}
  A \emph{(non-deterministic finite) tree automaton $A$ over a signature $\Sigma$} 
  is a tuple $A = \tuple{Q,\Sigma,F,\delta}$
  where 
  \begin{enumerate}[label=(\emph{\roman*})]
    \item $Q$ is a finite set of \emph{states},
    \item $F \subseteq Q$ is a set of \emph{accepting states}, and
    \item $\delta$ a set of rewrite rules, called \emph{transition rules}, of the shape
      \begin{align*}
        f(q_1,\ldots,q_n) \ared q
      \end{align*}
      where $n$ is the arity of $f \in \Sigma$ and $q_1,\ldots,q_n,q \in Q$. 
      We write $\ared$ for the rewrite relation generated by the rules $\delta$.
  \end{enumerate}
\end{definition}
Note that we use $\ared$ to distinguish automata transitions
from term rewriting $\to$ with respect to some TRS $R$.

\begin{definition}
  The \emph{language $\lang{A}$ accepted by $A$} is the set
  \begin{align*}
    \lang{A} = \{ \,t \mid t \in \ter{\Sigma}{\emptyset},\; q \in F,\; t \ared^* q \,\}
  \end{align*}
  of ground terms that rewrite to a final state.
\end{definition}

The kind of tree automata considered here is called {\em bottom up} in the literature.
Sometimes in the definition of bottom-up tree automaton the right hand side $q$ in the rule
has arguments and the acceptance criterion is rewriting to a term with a final state as root.
However, when tree automata are only used for defining (term) languages as is the case in
this paper, these definitions coincide. 

Tree automata can be seen as a generalization of string automata as follows. For a 
string automaton ($=$ NFA) $S$  define the tree automaton $A$ by 
\begin{itemize}
\item taking the same sets of states and accepting states, and 
\item taking as signature the same signature in which all symbols are unary, extended by a
single constant $\varepsilon$, and
\item taking as transition rules $\varepsilon \ared q_0$ for $q_0$ being the initial state of
$S$, and for every transition $q \stackrel{a}{\to} q'$ in $S$ the rule
$a(q) \ared q'$.
\end{itemize}
Form this definition it is immediate that a string $a_1 a_2 \cdots a_n$ is accepted by $S$ if 
and only if $a_n(a_{n-1}(\cdots(a_1(\varepsilon))\cdots))$ is accepted by $A$. 
Here we assume that $S$ reads the string from left to right
(otherwise there is no need to reverse the order of the letters).


\begin{example}\label{ex:automaton:lr}
  To define a tree automaton accepting the language $b\;a^*\;(L|R)\;a^*\;b$, that is,
  all words that start with $b$, end with $b$, contain one $L$ or $R$ and otherwise only $a$,
  we start by its corresponding string automaton

\noindent
\begin{minipage}{78mm}
\includegraphics[scale=0.19]{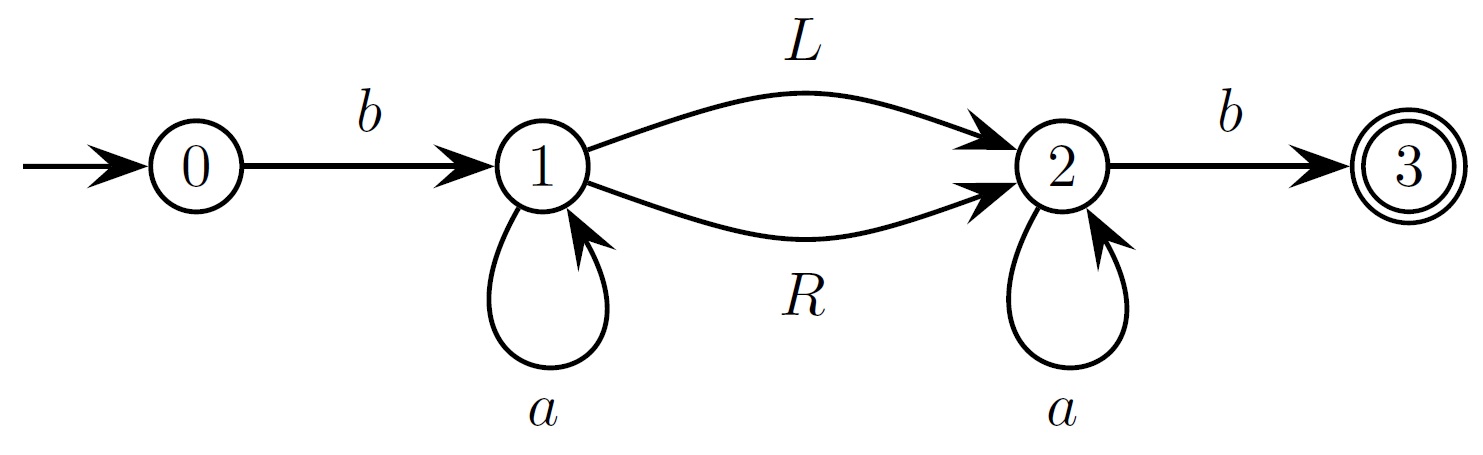}
\end{minipage}
\begin{minipage}{6cm}
The above construction yields the tree automaton $A_{LR} = \tuple{Q,\Sigma,F,\delta}$ where 
  $\Sigma = \{b,L,R,a,\varepsilon\}$
  in which $b,L,R,a$ are unary and $\varepsilon$ is a constant, $Q = \{0,1,2,3\}$, $F = \{3\}$
  and $\delta$ consists of the rules
\end{minipage}
  \begin{align*}
    \varepsilon &\ared 0 & a(1) &\ared 1 & b(0) &\ared 1 & R(1) &\ared 2 & L(1) &\ared 2\\
    && a(2) &\ared 2 & b(2) &\ared 3 
  \end{align*}
\end{example}

\begin{example}\label{ex:automaton:s}
  The following is a tree automaton for the signature $\Sigma = \{\sap,\combS\}$ 
  where $a$ is binary and $\combS$ is a constant.
  Let $A_{S} = \tuple{Q,\Sigma,F,\delta}$ where $Q = \{0,1,2,3,4\}$, $F = \{4\}$
  and
  \begin{align*}
    \combS &\ared 0 & 
    \sap(0,0) &\ared 1 &
    \sap(1,0) &\ared 2 & 
    \sap(2,2) &\ared 3 & 
    \sap(3,3) &\ared 3 
    \\ 
    && \sap(0,2) &\ared 2 &
    && \sap(2,3) &\ared 3 & 
    \sap(3,3) &\ared 4
    \\ 
    && \sap(0,3) &\ared 2
  \end{align*}
As is usual in combinatory logic, ground terms are represented by omitting the $a$ symbol
and writing $uvw = (uv)w$.  We show that this automaton accepts the term 
  $\combS\combS\combS(\combS\combS\combS)(\combS\combS\combS(\combS\combS\combS))$:
  \begin{align*}
    \combS\combS\combS(\combS\combS\combS)(\combS\combS\combS(\combS\combS\combS)) 
    &\ared^{12} 000(000)(000(000)) \\ 
    &\ared^4 10(10)(10(10)) 
    \ared^4 22(22) 
    \ared^2 33 
    \ared^1 4 
  \end{align*}
  Since $4 \in F$ the term is accepted by the automaton.

  This automaton has been found automatically by our tool, 
  and its language is closely related to the $\mathcal{QQQ}$-criterion of Waldmann~\cite{wald:2000,starling:2015}.
  Roughly speaking, the language recognized by this automaton can be described as follows:
  \begin{itemize}
    \item state $0$ accepts only the term $\combS$,
    \item state $1$ accepts only the term $\combS\combS$,
    \item state $2$ corresponds to terms that contain at least \emph{one} occurrence of $\combS\combS\combS$,
    \item state $3$ corresponds to terms that contain at least \emph{two} occurrence of $\combS\combS\combS$, and
    \item state $4$ accepts terms $MN$ for which both $M$ and $N$ contain two occurrences of $\combS\combS\combS$.
  \end{itemize}
\end{example}

\section{Basic Methods}
\label{secbm}

In this section, we are concerned with
automating the abstract non-termination methods from Section~\ref{secar}.
To this end, we use finite tree automata giving rise to regular tree languages.
We first develop methods for disproving weak normalization
and then for disproving strong normalization.

The applicability of the non-termination techniques described in the remainder of this paper 
can be improved by two simple transformations of term rewrite systems $R$.
The first transformation concerns the introduction of a fresh constant in $\Sigma$ (see Remark~\ref{rem:constant}) 
and the second transformation describes the elimination of collapsing rules in $R$ (Remark~\ref{rem:collapsing}).
Both transformation do not affect (weak) normalisation of $R$.

\begin{encoding}[Adding fresh constants]\label{rem:constant}
  Let $R$ be a TRS over a signature $\Sigma$.
  If we are interested in non-termination of $R$ on arbitrary terms (including non-ground terms),
  then, without loss of generality, we can extend the signature~$\Sigma$
  with a fresh constant $c$ (a symbol of arity~$0$).
  Since the fresh constant can be thought of as a variable,
  this transformation neither affects weak nor strong normalisation of $R$.
  Moreover, by adding a constant, we can reduce strong normalisation of $R$ 
  to strong normalisation of $R$ on all ground terms.
  The reason is that, if the signature contains at least one constant symbol, then both properties coincide.
  
  Note that adding a single fresh constant does not not suffice for weak normalisation.
  The addition of multiple fresh constants might be needed to make weak normalisation of $R$ 
  coincide with weak normalisation of $R$ on ground terms.
  
  In this paper, we are interested in disproving (weakly) normalisation.
  The automata techniques we employ in this paper actually yield the the stronger property
  that $R$ is not (weakly) normalising on ground terms.
  Therefore we tacitly assume that the signature is extended with a fresh constant (if it does not already contain one).
\end{encoding}

\begin{encoding}[Elimination of collapsing rules]\label{rem:collapsing}
  Let $R$ be a TRS over a signature $\Sigma$.
  For the techniques in the remainder 
   by first eliminating collapsing rules,
  that is, rules of the form $\ell \to x$ with $x \in \vars$.
  Assume that the TRS $R$ contains a collapsing rule $\ell \to x$.
  For every $f \in \Sigma$ we define the substitution $\sigma_f : \vars \to \ter{\Sigma}{\vars}$ 
  by $\sigma_f(x) = f(x_1,\ldots,x_{\arity{f}})$ for fresh variables $x_1,\ldots,x_{\arity{f}}$
  and $\sigma_f(y) = y$ for all $y \ne x$.
  We define 
  \begin{align*}
    R' = (R \setminus \{\ell \to x\}) \cup \{\ell\sigma_f \to x\sigma_f \mid f \in \Sigma\}
  \end{align*}
  Then $R$ and $R'$ induce the same rewrite relation on ground terms.
  Hence $R'$ is (weakly) ground normalizing if and only if $R$ is.

  Moreover, $R'$ is strongly normalising if and only if $R$ is.
  This can be seen as follows.
  The `if'-direction follows immediately from ${\to_{R'}} \subseteq {\to_R}$.
  For the `only if'-direction assume that $R$ admits an infinite rewrite sequence $t_1 \to_R t_2 \to_R t_3 \to_R \cdots$.
  Let $f \in \Sigma$ and let $x_1,\ldots,x_{\arity{f}}$ be fresh variables.
  Define a substitution $\sigma$ by $\sigma(x) = f(x_1,\ldots,x_{\arity{f}})$ for all $x \in \vars$.
  Then $t_1\sigma \to_{R'} t_2\sigma \to_{R'} t_3\sigma \to_{R'} \cdots$ is an infinite rewrite sequence in $R'$.
\end{encoding}

\subsection{SAT Encoding of Properties}

In this section, we collect decision procedures for the main properties of 
tree automata that we employ for proving non-termination,
and we describe how we encode these procedures as Boolean satisfiability problems (SAT).

\begin{encoding}[SAT encoding of tree automata]\label{sat:automaton}
  We encode the search for a tree automaton $A = \tuple{Q,\Sigma,F,\delta}$
  over a signature $\Sigma$ as a satisfiability problem as follows.
  We pick the number of states $n \in \nat$ the automaton should have; the set of states is $Q = \{s_1,\ldots,s_n\}$.
  While the set of states $Q$ is fix,
  we represent the final states $F \subseteq Q$ by $n$ fresh Boolean variables
  \begin{align*}
    v_{F,s_1},\; v_{F,s_2},\; v_{F,s_3},\ldots,v_{F,s_n}
  \end{align*}
  and, for every $f \in \Sigma$, we represent the transition relation $\delta$ by fresh variables
  \begin{align*}
    v_{f,q_1,\ldots,q_{\arity{f}},q} \quad\quad\text{for every $q_1,\ldots,q_{\arity{f}},q \in Q$}
  \end{align*}
  For the moment, there are no constraints (formulas) and 
  the interpretation of these variables can be chosen freely. 
  The intention is that $v_{F,s_i}$ is true if and only if $s_i$ is a final state,
  and $v_{f,q_1,\ldots,q_{\arity{f}},q}$ is true if and only if $f(q_1,\ldots,q_{\arity{f}}) \ared q$ is 
  a transition rule in $\delta$.
\end{encoding}

\begin{definition}
  A state $q \in Q$ of a tree automaton $A = \tuple{Q,\Sigma,F,\delta}$
  is called \emph{reachable} if there exists a ground term $t \in \ter{\Sigma}{\emptyset}$
  such that $t \ared^* q$.
\end{definition}

We assume, without loss of generality, that all states are reachable.
Note that requiring that all states are reachable is not a restriction
since we can always replace unreachable states by `copies' of reachable states.
We guarantee reachability as follows.

\begin{lemma}\label{lem:reachable}
  Let $A = \tuple{Q,\Sigma,F,\delta}$ be a tree automaton.
  Then all states of $A$ are reachable if and only if 
  there exists a total well-founded order $<$ on the states $Q$
  such that for every $q \in Q$ there exists $f \in \Sigma$
  and states $q_1 < q$, $q_2 < q$, \ldots, $q_{\arity{f}} < q$
  with $f(q_1,\ldots,q_{\arity{f}}) \ared q$.
\end{lemma}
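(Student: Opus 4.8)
The plan is to prove both directions of the equivalence directly.

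For the "if" direction, suppose such a total well-founded order $<$ on $Q$ exists. I would prove by well-founded induction on $<$ that every state $q \in Q$ is reachable. Given $q$, the hypothesis provides some $f \in \Sigma$ and states $q_1, \ldots, q_{\arity{f}} < q$ with $f(q_1,\ldots,q_{\arity{f}}) \ared q$. By the induction hypothesis, each $q_i$ is reachable, so there exist ground terms $t_i$ with $t_i \ared^* q_i$. Then $f(t_1,\ldots,t_{\arity{f}}) \ared^* f(q_1,\ldots,q_{\arity{f}}) \ared q$ exhibits $q$ as reachable, using that $\ared$ is closed under contexts. Note this also covers the base case: if $f$ is a constant ($\arity{f} = 0$), the condition degenerates to $f \ared q$ with no smaller states required, and $f$ itself is a witnessing ground term.

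For the "only if" direction, suppose all states of $A$ are reachable. For each $q \in Q$, fix a ground term $t_q$ of minimal size (number of symbols) with $t_q \ared^* q$. I would define the order by comparing these minimal sizes: say $q' < q$ iff $|t_{q'}| < |t_q|$, breaking ties arbitrarily to get a total order (e.g. by a fixed enumeration of $Q$). This is well-founded since it embeds into $\nat$. The key step is to check the stated closure property. Take any $q$ and look at a shortest reduction $t_q \ared^* q$; since $t_q$ is not itself a state, the last step has the form $f(q_1,\ldots,q_{\arity{f}}) \ared q$ where the term just before the last step is $f(q_1,\ldots,q_{\arity{f}})$. I claim each $q_i$ satisfies $q_i < q$. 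Indeed, $t_q$ has the shape $f(u_1,\ldots,u_{\arity{f}})$ with $u_i \ared^* q_i$, so $|t_{q_i}| \le |u_i| < |t_q|$, giving $q_i < q$ (strict, since $|t_q| = 1 + \sum_i |u_i| > |u_i| \ge |t_{q_i}|$; when $\arity{f}=0$ there are no $q_i$ to constrain).

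The main obstacle is getting the bookkeeping right in the "only if" direction: one must be careful that the reduction $t_q \ared^* q$ that is chosen actually has its penultimate term decompose as $f$ applied to states, which follows because $\ared$ rewrites innermost subterms to states and a shortest such reduction must reduce all of $f$'s arguments to states before firing the final transition. Also, the strictness $|t_{q_i}| < |t_q|$ relies on $|t_q|$ being minimal among terms reducing to $q$ — but actually we only need $|u_i| < |t_q|$, which is immediate from $t_q = f(u_1,\ldots,u_{\arity{f}})$, and then $|t_{q_i}| \le |u_i|$ by minimality of $t_{q_i}$. Everything else is routine.
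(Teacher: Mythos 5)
Your proof is correct. The ``if'' direction is the same argument as the paper's, phrased as well-founded induction rather than as a minimal-counterexample; these are interchangeable. The ``only if'' direction takes a genuinely different route. The paper never mentions terms at all: it observes that when all states are reachable, $Q$ arises by saturating $\varnothing$ under $\delta$ one state at a time, $\varnothing = Q_0 \subseteq Q_1 \subseteq \cdots \subseteq Q_{|Q|} = Q$, and orders the states by the stage at which they are added, so that the transition justifying each $q_i$ uses only states from $Q_i$, i.e.\ strictly smaller ones. You instead order states by the size of a minimal witnessing ground term (ties broken arbitrarily) and extract the required transition from the last step of a reduction $t_q \ared^* q$. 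Both work. The paper's version buys simplicity --- it needs no analysis of where rewrite steps occur, only that reachability is the least fixed point of closure under $\delta$ --- while yours buys an explicitly defined, canonical order and makes the connection to witnessing terms concrete, at the price of the decomposition bookkeeping you already flag: the final step of any reduction of a ground term to a single state must occur at the root (a step below the root preserves the root symbol, so cannot produce a bare state), hence the penultimate term is $f(q_1,\ldots,q_{\arity{f}})$ with the same root symbol $f$ as $t_q$, and the earlier steps split into independent reductions $u_i \ared^* q_i$. Your strictness argument $|t_{q_i}| \le |u_i| < |t_q|$ then goes through; note that ``shortest reduction'' is not actually needed for this, since every reduction to a single state has this shape.
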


\begin{proof}
  The reachable states are the smallest set $Q' \subseteq Q$
  that is closed under $\delta$, that is, 
  $q \in Q'$ whenever $f(q_1,\ldots,q_{\arity{f}}) \ared q$ for some $f \in \Sigma$ and states $q_1,\ldots,q_{\arity{f}} \in Q'$.

  For the `if'-part, assume that there was a non-reachable state.
  Let $q \in Q$ be the smallest non-reachable state with respect to the order $<$.
  By assumption there exist $f \in \Sigma$ and states $q_1 < q$, $q_2 < q$, \ldots, $q_{\arity{f}} < q$
  with $f(q_1,\ldots,q_{\arity{f}}) \ared q$.
  By choice of $q$ it follows that all states $q_1,q_2,\ldots,q_{\arity{f}}$ are reachable,
  and hence $q$ is reachable, contradicting the assumption.
  
  For the `only if'-part, assume that all states are reachable.
  Then $Q$ is the result of stepwise closing $\varnothing$ under $\delta$. 
  There exists a sequence of states $\varnothing = Q_0 \subseteq Q_1 \subseteq \ldots \subseteq Q_{|Q|} = Q$
  such that for every $0 \le i < |Q|$ we have $Q_{i+1} = Q_{i} \cup \{q_i\}$
  for some $q_i \in Q \setminus Q_i$ such that there are $f_i \in \Sigma$ and states $q_{i,1},\ldots,q_{i,\arity{f_i}} \in Q_i$
  with $f_i(q_{i,1},\ldots,q_{i,\arity{f}}) \ared q_i$.
  The order $<$ induced by $q_0 < q_1 < \ldots < q_{|Q|-1}$
  is a total order on the states with the desired property.
\end{proof}

\begin{encoding}[SAT encoding of reachability of all states]\label{sat:reachable}
  We extend the encoding of tree automata as described in Remark~\ref{sat:automaton}.
  We want to guarantee that all states are reachable by employing Lemma~\ref{lem:reachable}.
  However, instead of encoding an arbitrary well-founded relation,
  we make use of the fact that the names of states are irrelevant.
  Hence, without loss of generality (modulo renaming of states), we may assume that $s_1 < s_2 <\ldots < s_n$.
  We then encode the condition of Lemma~\ref{lem:reachable} by formulas
  \begin{align*}
    \bigvee_{f\in\Sigma,\;q_1 < q,\ldots,\;q_{\arity{n}}< q} v_{f,q_1,\ldots,q_n,q}
  \end{align*}
  for every $q \in Q$.
\end{encoding}

The following lemma is immediate.

\begin{lemma}\label{lem:empty}
  Let $A = \tuple{Q,\Sigma,F,\delta}$ be a tree automaton
  such that all states are reachable.
  Then $\lang{A} \ne \emptyset$ if and only if $F \ne \emptyset$.
\end{lemma}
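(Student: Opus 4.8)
The plan is to prove both directions of the biconditional directly from the characterization of $\lang{A}$ and the assumption that all states are reachable.

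For the `only if' direction, suppose $\lang{A} \neq \emptyset$. Then by definition of the accepted language there is a ground term $t \in \ter{\Sigma}{\emptyset}$ and a state $q \in F$ with $t \ared^* q$. In particular $F$ contains at least one element, so $F \neq \emptyset$. Note this direction does not even use reachability.

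For the `if' direction, suppose $F \neq \emptyset$, so we may pick some $q \in F$. Since all states of $A$ are reachable, the state $q$ in particular is reachable, meaning there exists a ground term $t \in \ter{\Sigma}{\emptyset}$ with $t \ared^* q$. Since $q \in F$, this witnesses $t \in \lang{A}$, hence $\lang{A} \neq \emptyset$. The reachability hypothesis is exactly what is needed here: without it, $F$ could be non-empty yet consist only of states not reachable from any ground term, giving an empty language.

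There is no real obstacle; the lemma is essentially an unfolding of definitions once Lemma~\ref{lem:reachable} (or rather, just the notion of reachability) is in place. The only point worth stating carefully is that the `if' direction relies on the global reachability assumption, whereas the `only if' direction is unconditional. This is presumably why the excerpt calls the lemma ``immediate.''
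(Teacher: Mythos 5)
Your proof is correct and matches the intended argument: the paper simply declares this lemma ``immediate'' without giving a proof, and your unfolding of the definitions (with the accurate observation that only the `if' direction needs the reachability hypothesis) is exactly the argument the authors have in mind.
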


\begin{encoding}[SAT encoding of $\lang{A} \ne \emptyset$]\label{sat:empty}
  In a setting where all states are reachable,
  the encoding of $\lang{A} \ne \emptyset$ as satisfiability problem
  trivializes to: $\bigvee_{q \in Q} v_{F,q}$.
\end{encoding}

The following lemma gives a simple criterion for closure under rewriting.
\newcommand{\cgen}{Genet~\cite[Proposition 12]{gene:1998}}
\begin{lemma}[\cgen]\label{lem:closed}
  Let $A = \tuple{Q,\Sigma,F,\delta}$ be a tree automaton and $R$ a left-linear term rewriting system.
  Then $\lang{A}$ is closed under rewriting with respect to $R$ if
  for every $\ell \to r \in R$, $\alpha : \vars \to Q$ and $q \in Q$ we have
  $\ell\alpha \ared_A^* q \;\implies\; r\alpha \ared_A^* q$.
\end{lemma}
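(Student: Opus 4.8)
The plan is to prove the implication in Lemma~\ref{lem:closed} directly, by showing that the stated condition propagates acceptance through a single rewrite step, and then concluding by induction on the length of the reduction. So suppose the hypothesis holds: for every rule $\ell \to r \in R$, every $\alpha : \vars \to Q$ and every $q \in Q$, whenever $\ell\alpha \ared_A^* q$ then also $r\alpha \ared_A^* q$. Let $t \in \lang{A}$ and $t \to_R u$; I must show $u \in \lang{A}$. Since $t \in \lang{A}$, there is an accepting state $q_f \in F$ with $t \ared_A^* q_f$. It suffices to prove $u \ared_A^* q_f$, since then $u \in \lang{A}$ by definition.

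First I would fix the redex: $t = C[\ell\tau]$ and $u = C[r\tau]$ for some context $C$ and substitution $\tau : \vars \to \ter{\Sigma}{\emptyset}$, with $\ell \to r \in R$. The key observation is that a bottom-up tree automaton run on $C[\ell\tau]$ that reaches $q_f$ must, on the way up, rewrite the subterm $\ell\tau$ to some single state $q \in Q$ before continuing up through $C$; this is where left-linearity of $R$ and the bottom-up nature of $\ared_A$ enter. Concretely, I would argue that from $C[\ell\tau] \ared_A^* q_f$ one can extract a state $q$ with $\ell\tau \ared_A^* q$ and $C[q] \ared_A^* q_f$ — i.e.\ the reduction factors through collapsing the redex subterm to a state. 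Next, since $\ell\tau \ared_A^* q$ and $\ell$ is linear, the positions of variables in $\ell$ are disjoint, so the run on $\ell\tau$ reduces each $\tau(x)$ to some state $\alpha(x) \in Q$ independently, and then $\ell\alpha \ared_A^* q$; define $\alpha : \vars \to Q$ this way on the variables of $\ell$ (arbitrarily elsewhere). Applying the hypothesis to $\ell \to r$, $\alpha$, and $q$ gives $r\alpha \ared_A^* q$. Finally, since $\varsof{r} \subseteq \varsof{\ell}$ (a standard TRS requirement) and each $\tau(x)$ still reduces to $\alpha(x)$, we get $r\tau \ared_A^* r\alpha \ared_A^* q$, hence $u = C[r\tau] \ared_A^* C[q] \ared_A^* q_f$, so $u \in \lang{A}$. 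The general case $t \to_R^* u$ then follows by a trivial induction on the number of steps, and $L \cap \NF$-type considerations are not needed here.

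The main obstacle is the factorization step: making precise that an accepting run of a bottom-up tree automaton on $C[\ell\tau]$ necessarily passes through a state $q$ on the subterm $\ell\tau$, and that on $\ell\tau$ itself the run assigns a well-defined state $\alpha(x)$ to each occurrence of each variable of $\ell$ in a way consistent across occurrences — this is exactly where left-linearity is essential, since with a non-left-linear $\ell$ two occurrences of the same variable could be reduced to different states and no single $\alpha$ would witness $\ell\alpha \ared_A^* q$. I would handle this with a short sublemma: for any term $s$ and state $q$, $s \ared_A^* q$ iff there is a map assigning to each position of $s$ a state such that the assignment respects $\delta$ at every position and assigns $q$ to the root (the standard ``run'' characterization of $\ared_A^*$ for bottom-up automata), and then read off $\alpha$ and $q$ from such a run for $C[\ell\tau]$. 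Everything else — monotonicity of $\ared_A^*$ under contexts and substitutions, and the induction on reduction length — is routine. Since the paper attributes this to Genet, I would most likely just cite \cgen{} for the proof and only sketch the run-based argument above; if a self-contained proof is wanted, the run characterization sublemma is the one piece that deserves to be spelled out.
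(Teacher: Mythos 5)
Your proof is correct and follows essentially the same route the paper takes: the paper itself only cites Genet for Lemma~\ref{lem:closed}, but its proof of the analogous Lemma~\ref{lem:wclosed} uses exactly your decomposition — factor the accepting run through a state $q$ at the redex position, use left-linearity to extract a single assignment $\alpha$ with $\ell\alpha \ared^* q$, apply the hypothesis, and plug $r\tau \ared^* q$ back into the context. Note that by the paper's definition of ``closed under $\to$'' only one rewrite step needs to be handled, so your final induction on reduction length is not even required.
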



Note that left-linearity of $R$ is crucial for the Lemma~\ref{lem:closed}
since $A$ can be a non-deterministic automaton.
If $R$ would contain non-left-linear rules $\ell \to r$
then we would need to check set-assignments $\alpha : \vars \to \powerset{Q}$
instead $\alpha : \vars \to Q$.
That is, we would need to take into account, 
that a non-deterministic automaton can interpret the same term by different states.

For terms $t$, we use $\varsof{t}$ to denote the set of variables occurring in $t$.
\begin{encoding}[SAT encoding of closure under rewriting]\label{sat:closed}
  We encode the conditions of~Lemma~\ref{lem:closed}. 
  Let the automaton $A$ be encoded as in Remark~\ref{sat:automaton}.
  Let $U$ be the set of all non-variable subterms 
  of left-hand sides and right-hand sides of rules in $R$.
  For every $t \in U$, assignment $\alpha : \varsof{t} \to Q$
  and $q \in Q$ we introduce a fresh variable
  \begin{align*}
    v_{t,\alpha,q} && \text{with the intended meaning: $v_{t,\alpha,q}$ is true $\iff$ $t\alpha \ared^* q$\;.}
  \end{align*}
  We ensure this meaning by the following formulas:
  for terms $t= f(t_1,\ldots,t_n) \in U$
  \begin{align*}
    v_{t,\alpha,q} \;\longleftrightarrow\; \bigvee_{q_1,\ldots,q_n \in Q} 
      \big( 
        v_{t_1,\alpha_1,q_1} \;\wedge\;
        \ldots
        \;\wedge\; v_{t_n,\alpha_n,q_n}
        \;\wedge\; v_{f,q_1,\ldots,q_n,q} 
      \big)
  \end{align*}
  where $\alpha_i$ is the restriction of $\alpha$ to the domain $\varsof{t_i}$.
  For variables $x \in U$, we stipulate $v_{x,\alpha,q} \iff \alpha(x) = q$;
  note that we can immediately evaluate and fill in these truth values.
  Finally, we encode $\ell\alpha \ared_A^* q \implies r\alpha \ared_A^* q$ by formulas
  \begin{align*}
    v_{\ell,\alpha,q} \;\to\; v_{r,\alpha,q}
  \end{align*}
  for every $\ell \to r \in R$, $\alpha : \varsof{\ell} \to Q$ and $q \in Q$.
\end{encoding}

The following modification of Lemma \ref{lem:closed} gives a simple criterion for weak
closure under rewriting.
The requirement $r\alpha \ared_A^* q$ of Lemma \ref{lem:closed}
is weakened to $t\alpha \ared_A^* q$ for some reduct $t$ the left-hand side $\ell$.

\begin{lemma}\label{lem:wclosed}
  Let $A = \tuple{Q,\Sigma,F,\delta}$ be a tree automaton and $R$ a left-linear term rewriting system.
  Then $\lang{A}$ is weakly closed under rewriting with respect to $R$ if 
  for every $\ell \to r \in R$, $\alpha : \vars \to Q$ and $q \in Q$ we have
  $\ell\alpha \ared_A^* q \;\implies\; t\alpha \ared_A^* q$ for some term $t$ such that $\ell \to_R^+ t$.
\end{lemma}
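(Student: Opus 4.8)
The plan is to mirror the proof of Lemma~\ref{lem:closed}, only weakening the last step. Recall that $\NF(\to_R) = \NF(\to_R^+)$, so it suffices to show: for every $s \in \lang{A}$ with $s \notin \NF(\to_R)$ there is $u \in \lang{A}$ with $s \to_R^+ u$; this is exactly ``$\lang{A}$ weakly closed under $\to_R^+$'', which is the form needed by Theorem~\ref{thm:sn}. So I fix such an $s$, an accepting run $s \ared_A^* q_f$ with $q_f \in F$, and a redex occurrence in $s$, i.e.\ a one-hole context $C$, a rule $\ell \to r \in R$ and a substitution $\sigma$ with $s = C[\ell\sigma]$.

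The engine of the proof is the standard decomposition property of bottom-up tree-automaton runs: whenever $C'[v] \ared_A^* q$ then $v \ared_A^* q'$ and $C'[q'] \ared_A^* q$ for some state $q'$ (and, by iterating this over pairwise parallel positions, the analogous statement for multi-hole contexts). Applying it to $C[\ell\sigma] \ared_A^* q_f$ yields a state $q$ with $\ell\sigma \ared_A^* q$ and $C[q] \ared_A^* q_f$. Applying it again to $\ell\sigma \ared_A^* q$ at the variable leaves of $\ell$ --- which sit at pairwise parallel positions, and which are pairwise distinct precisely because $R$ is left-linear --- yields an assignment $\alpha_0 : \varsof{\ell} \to Q$ with $\sigma(x) \ared_A^* \alpha_0(x)$ for all $x \in \varsof{\ell}$ and $\ell\alpha_0 \ared_A^* q$. (Left-linearity is indispensable here: with two occurrences of the same variable the decomposition could force two different states and no consistent $\alpha_0$ would exist --- this is the issue flagged after Lemma~\ref{lem:closed}.) Extending $\alpha_0$ to some $\alpha : \vars \to Q$, the hypothesis of the lemma applies to $\ell\alpha \ared_A^* q$ and delivers a term $t$ with $\ell \to_R^+ t$ and $t\alpha \ared_A^* q$.

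It remains to pull the run $t\alpha \ared_A^* q$ back along $\sigma$. Since the variable condition $\varsof{r} \subseteq \varsof{\ell}$ is preserved under rewriting, $\varsof{t} \subseteq \varsof{\ell}$, so $t\alpha$ and $t\sigma$ differ only in that each variable leaf of $t$ carrying $\sigma(x)$ in $t\sigma$ carries the state $\alpha(x)$ in $t\alpha$; reducing every such subterm $\sigma(x)$ to $\alpha(x)$ (at these parallel positions) gives $t\sigma \ared_A^* t\alpha \ared_A^* q$. Note this direction does not require $t$ to be linear: we simply reduce all occurrences. Now put $u = C[t\sigma]$. From $\ell \to_R^+ t$ and closure of $\to_R$ under substitutions and contexts we get $s = C[\ell\sigma] \to_R^+ C[t\sigma] = u$, and from $t\sigma \ared_A^* q$ and $C[q] \ared_A^* q_f$ we get $u = C[t\sigma] \ared_A^* C[q] \ared_A^* q_f$ with $q_f \in F$, hence $u \in \lang{A}$. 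This is what we wanted.

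The one point that carries the whole argument is the run-decomposition property, used twice (peeling off the redex context, then the variable subterms); it is completely standard, but the proof rests on it, together with the asymmetry that left-linearity is needed to extract $\alpha$ from $\sigma$ on the left-hand side but is not needed to go back from $\alpha$ to $\sigma$ on the right-hand side.
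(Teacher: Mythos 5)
Your proof is correct and follows essentially the same route as the paper's: decompose the accepting run at the redex to obtain $q$ with $\ell\sigma \ared^* q$ and $C[q] \ared^* q_f$, use left-linearity to extract a consistent $\alpha$ with $\sigma(x) \ared^* \alpha(x)$, invoke the hypothesis to get $t$ with $\ell \to_R^+ t$ and $t\alpha \ared^* q$, and reassemble to conclude $C[t\sigma] \in \lang{A}$. The only difference is that you spell out the run-decomposition steps and the point that $\varsof{t} \subseteq \varsof{\ell}$ is needed to pull the run back along $\sigma$, which the paper leaves implicit.
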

\begin{proof}
  Let $s \in \lang{A} \setminus \NF(\to_R)$. 
  Then $s = C[\ell\sigma]$ 
  for a context $C$, rewrite rule $\ell \to r \in R$ and substitution $\sigma : \vars \to \ter{\Sigma}{\emptyset}$.
  Since $s \in \lang{A}$ there exists $q \in Q$ such that $\ell\sigma \ared^* q$ and $C[q] \ared^* q'$ with $q'\in F$.
  By left-linearity $\ell$ does not contain duplicated occurrences of variables.
  As a consequence, there exists $\alpha : \vars \to Q$ such that $\sigma(x) \ared^* \alpha(x)$
  and $\ell\alpha \ared^* q$. 
  By the assumptions of the lemma, a term $t$ exists such that $\ell \to_R^+ t$ and   $t\alpha \ared^* q$.
  Hence $t\sigma \ared^* q$ and $C[t\sigma] \ared^* C[q] \ared^* q'$.
  Thus $C[t\sigma] \in \lang{A}$. Since $s = C[\ell\sigma] \to_R^+ C[t\sigma]$, this proves that $\lang{A}$ is 
  weakly closed under rewriting with respect to $R$.
\end{proof}

\begin{encoding}[SAT encoding of Lemma~\ref{lem:wclosed}]\label{sat:wclosed}
  The conditions of Lemma~\ref{lem:wclosed} can be encoded similar to Lemma~\ref{lem:closed} (described in Remark~\ref{sat:closed}). 
  In Lemma~\ref{lem:wclosed} the condition $r\alpha \ared_A^* q$
  is weakened to: $t\alpha \ared_A^* q$ for some reduct $t$ the left-hand side $\ell$.
  We can pick a finite set of reducts $U \subseteq \{t \mid \ell \to^+ t\}$ of the left-hand side $\ell$,
  and encode the disjunction $\bigvee_{t \in U} t\alpha \ared_A^* q$ as a Boolean satisfiability problem.
  Note that $U \ne \varnothing$ since $r \in U$.
\end{encoding}

Next, we want to guarantee that the language $\lang{A}$ contains no normal forms,
in other words, that every term in the language contains a redex.
For left-linear term rewriting systems $R$, we can reduce this problem
to language inclusion $\lang{A} \subseteq \lang{B}$
where $B$ is a tree automaton that accepts the language of reducible terms.
If $R$ is a left-linear rewrite system, then
the set of ground terms containing redex occurrences is a regular tree language.
A deterministic automaton $B$ for this language can be constructed
using the overlap-closure of subterms of left-hand sides, 
see further~\cite{endr:hend:2010,endr:hend:2011}.
Here, we do not repeat the construction, but state the lemma that we will employ:

\begin{lemma}\label{lem:redex:automaton}
  Let $\{\ell_1,\ldots,\ell_n\}$ be a set of linear terms over $\Sigma$. 
  Then we can construct a deterministic and complete automaton $B = \tuple{Q,\Sigma,F,\delta}$
  and sets $F_{\ell_1},\ldots,F_{\ell_n} \subseteq Q$ such that 
  for every term $t\in\ter{\Sigma}{\varnothing}$ and $i \in \{1,\ldots,n\}$ we have:
  \begin{itemize}
    \item $t \ared^* q$ with $q \in F_{\ell_i}$ if and only if $t$ is an instance of $\ell_i$.
  \end{itemize}
\end{lemma}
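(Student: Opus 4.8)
The plan is to build the deterministic automaton $B$ bottom-up so that each state records, for the input subterm read so far, which subpatterns of the $\ell_i$ it matches. Concretely, let $P$ be the set of all subterms of $\ell_1,\ldots,\ell_n$ (including the $\ell_i$ themselves), closed under taking subterms; we take the states of $B$ to be certain subsets of $P \cup \{\top\}$, where $\top$ is a fresh symbol standing for ``matches no pattern subterm of interest.'' Since each $\ell_i$ is linear, a variable occurring in $\ell_i$ imposes no constraint, so matching a linear pattern against a ground term is a purely local, bottom-up computation: $t$ is an instance of $p = f(p_1,\ldots,p_k)$ iff $\mathrm{root}(t) = f$ and each immediate subterm of $t$ is an instance of the corresponding $p_j$, and $t$ is an instance of a variable pattern always. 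Thus the natural state for a ground term $t$ is the set $\mu(t) = \{ p \in P \mid t \text{ is an instance of } p\}$, together with $\top$ to keep the state set nonempty; this function is compositional in exactly the sense needed to define $\delta$.

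First I would make this precise. Define, for $f \in \Sigma$ of arity $k$ and states $S_1,\ldots,S_k$, the transition $f(S_1,\ldots,S_k) \ared S$ where
\begin{align*}
  S = \{\, p \in P \mid p \text{ is a variable}\,\} \;\cup\; \{\, f(p_1,\ldots,p_k) \in P \mid p_j \in S_j \text{ for all } j \,\}
\end{align*}
and if this $S$ is empty we instead set $S = \{\top\}$ (equivalently, always throw $\top$ in, so the map is total and every state is nonempty). This rule is well-defined for every combination of argument states, so $B$ is complete, and it is deterministic because $S$ is uniquely determined by $f$ and $S_1,\ldots,S_k$. We then set $Q$ to be the set of states actually reachable this way (a subset of $\mathcal{P}(P \cup \{\top\})$, hence finite), pick $F = Q$ (the accepting set is irrelevant for the statement, so any choice works), and define $F_{\ell_i} = \{\, S \in Q \mid \ell_i \in S \,\}$.

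Next I would prove the invariant $t \ared_B^* S \iff S = \mu(t) \cup \{\top\}$ by induction on the structure of the ground term $t$, using determinism of $B$ so that ``$t \ared^* S$'' pins down $S$ uniquely. The base case is a constant $c$: the unique state is $\{\top\} \cup \{p \in P \mid p \text{ a variable or } p = c\}$, which is exactly $\mu(c) \cup \{\top\}$. The inductive step unfolds $t = f(t_1,\ldots,t_k)$, applies the induction hypothesis to each $t_j$ to get $t_j \ared^* \mu(t_j)\cup\{\top\}$, and then checks that the single transition rule produces precisely $\mu(t) \cup \{\top\}$ — this is where the bottom-up matching characterization of ``instance of a linear pattern'' is invoked. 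From the invariant, $t \ared^* q$ with $q \in F_{\ell_i}$ iff $\ell_i \in \mu(t) \cup \{\top\}$ iff $\ell_i \in \mu(t)$ (as $\ell_i \ne \top$) iff $t$ is an instance of $\ell_i$, which is the claim.

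I do not expect a serious obstacle here; the lemma is a standard construction and the proof is essentially bookkeeping. The one point requiring care is the handling of variable patterns and the $\top$ padding so that states are always nonempty and the automaton is genuinely \emph{complete} (a transition for every tuple of argument states) and \emph{deterministic} (exactly one) simultaneously — getting the set $S$ above to be the unique, always-defined image is the crux. A secondary subtlety is ensuring finiteness: states are subsets of the finite set $P \cup \{\top\}$, so $Q$ is finite regardless of how few are reachable, and one may either keep all such subsets as states or restrict to reachable ones. Everything else follows by the straightforward structural induction sketched above.
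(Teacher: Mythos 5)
Your construction is correct, and it is essentially the construction the paper itself alludes to (the paper gives no proof, deferring to the cited ``overlap-closure of subterms of left-hand sides'' construction, whose states are precisely the sets of pattern subterms matched by the current term, as in your $\mu$). The one place that genuinely needs linearity --- the compositional characterization ``$t$ is an instance of $f(p_1,\ldots,p_k)$ iff each $t_j$ is an instance of $p_j$'' --- is correctly identified and justified, so the structural induction goes through.
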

Note that by choosing $F = F_{\ell_1} \cup \ldots \cup F_{\ell_n}$ we obtain:
$t \ared^* q$ with $q \in F$ if and only if $t$ is an instance $\ell_i$ for some $i \in \{1,\ldots,n\}$.

\begin{example}\label{ex:redex:s}
  The following tree automaton $B_{S} = \tuple{Q,\Sigma,F,\delta}$ accepts the language of
  ground terms that contain a redex occurrence with respect to the $S$-rule $a(a(a(S, x), y), z) \to a(a(x, z), a(y, z))$.
  Here $Q = \{0,1,2,3\}$, $\Sigma = \{\sap,\combS\}$, $F = \{3\}$
  and
  \begin{align*}
    &\combS \ared 0 & 
    \sap(0,q) &\ared 1 &
    \sap(1,q) &\ared 2 & 
    \sap(2,q) &\ared 3 &
    \sap(3,q) &\ared 3 &
    \sap(q',3) &\ared 3 
  \end{align*}
  for all $q \in \{0,1,2\}$ and $q' \in \{0,1,2,3\}$.
  
  Since the automaton $B_{S}$ is deterministic and complete, 
  we can obtain an automaton $\overline{B_{S}} = \tuple{Q,\Sigma,\overline{F},\delta}$ 
  that accepts the complement of the language
  (the language of ground normal forms)
  by taking the complement $\overline{F} = \{0,1,2\}$ of the set of final states.
\end{example}

The following is crucial for feasibility of our approach.
Deciding language inclusion of non-deterministic automata is known to be {\small EXPTIME} complete, see~\cite{seid:1989}.
However, to guarantee that a language contains no normal forms,
it suffices to check whether two non-deterministic automata 
have a non-empty intersection.
This property can be decided in polynomial time by
constructing the product automaton and considering the reachable states.

\begin{definition}\label{def:product:automaton}
  The \emph{product $A \product B$} of tree automata $A = \tuple{Q,\Sigma,F,\delta}$ and $B = \tuple{Q',\Sigma,F',\delta'}$
  is the tree automaton $C = \tuple{Q \times Q',\Sigma,F\times F',\gamma}$ 
  where the transition relation $\gamma$ is given by
  \begin{align*}
    f(\;(q_1,p_1),\ldots,(q_n,p_n)\;) \ared_\gamma (q',p') 
    \;\;\iff\;\;
    f(q_1,\ldots,q_n) \ared_\delta q' 
    \;\wedge\;
    f(p_1,\ldots,p_n) \ared_{\delta'} p' 
  \end{align*}
  for every $f \in \Sigma$ of arity $n$ and states $q_1,\ldots,q_n,q' \in Q$ and $p_1,\ldots,p_n,p' \in Q'$.
\end{definition}

\begin{lemma}\label{lem:intersection}
  Let $A = \tuple{Q,\Sigma,F,\delta}$ and $B = \tuple{Q',\Sigma,F',\delta'}$ be tree automata.
  Then we have $\lang{A} \cap \lang{B} = \varnothing$
  if and only if
  in $A\product B$
  no state in $F \times F'$ is reachable.
\end{lemma}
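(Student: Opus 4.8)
The plan is to prove both directions by relating reachability in the product automaton $A \product B$ to simultaneous reachability in $A$ and $B$. The key structural fact I would establish first, by induction on the term $t \in \ter{\Sigma}{\emptyset}$, is that for all states $q \in Q$ and $p \in Q'$,
\[
  t \ared_{A\product B}^* (q,p) \quad\longleftrightarrow\quad t \ared_A^* q \;\wedge\; t \ared_B^* p.
\]
The base case handles constants $c \in \Sigma$ of arity $0$: by Definition~\ref{def:product:automaton}, $c \ared_\gamma (q,p)$ exactly when $c \ared_\delta q$ and $c \ared_{\delta'} p$, and since no proper rewriting can precede a constant, $\ared^*$ coincides with a single $\ared$ step composed with... actually with the empty or one-step reduction, so the equivalence is immediate. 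For the inductive step with $t = f(t_1,\ldots,t_n)$, a reduction $t \ared_{A\product B}^* (q,p)$ must, because transition rules only rewrite at a state-labelled position, first reduce each argument $t_i$ to some pair state $(q_i,p_i)$ and then apply a rule $f((q_1,p_1),\ldots,(q_n,p_n)) \ared_\gamma (q,p)$; the induction hypothesis converts each $t_i \ared_{A\product B}^* (q_i,p_i)$ into $t_i \ared_A^* q_i$ and $t_i \ared_B^* p_i$, and the defining biconditional of $\gamma$ splits the final rule into the two components, giving $t \ared_A^* q$ and $t \ared_B^* p$. The converse direction of the inductive step is the same argument read backwards: combine component reductions of the $t_i$ in $A$ and $B$ into pair-state reductions using the IH, then apply the $\gamma$-rule obtained from the two component rules.

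Given this lemma, the main statement follows directly. A state $(q,p) \in F \times F'$ is reachable in $A \product B$ iff some ground term $t$ satisfies $t \ared_{A\product B}^* (q,p)$, which by the structural lemma holds iff $t \ared_A^* q$ and $t \ared_B^* p$ for some $t$; since $q \in F$ and $p \in F'$ this says precisely $t \in \lang{A} \cap \lang{B}$. Hence some state in $F \times F'$ is reachable iff $\lang{A}\cap\lang{B} \ne \varnothing$, which is the contrapositive of the claim.

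I expect the main (though still modest) obstacle to be stating the structural lemma at the right level of generality and handling the "$\ared^*$ only ever rewrites a subterm all of whose arguments are already states" observation cleanly — i.e.\ the normal-form-of-derivations argument that any successful reduction of $f(t_1,\ldots,t_n)$ to a single state factors through a stage where the term has shape $f(r_1,\ldots,r_n)$ with each $r_i$ a state. This is a standard fact about bottom-up tree automata but deserves an explicit sentence, since the product construction's correctness hinges on it; once it is in place, everything else is bookkeeping with the biconditional defining $\gamma$. No left-linearity or determinism hypotheses are needed here, so the proof is purely combinatorial.
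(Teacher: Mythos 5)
Your proposal is correct and follows essentially the same route as the paper: both directions reduce to the correspondence $t \ared_{A\product B}^* (q,p) \iff t \ared_A^* q \wedge t \ared_B^* p$, which the paper simply asserts ("this directly translates to") and you additionally justify by an explicit induction on term structure. The extra care about factoring a run of $f(t_1,\ldots,t_n)$ through a stage $f(q_1,\ldots,q_n)$ is a reasonable elaboration of the same argument, not a different approach.
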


\begin{proof}
  Let $A\times B = \tuple{Q \times Q',\Sigma,\emptyset,\gamma}$.
  For the `if'-part, assume that $\lang{A} \cap \lang{B} \ne \varnothing$. 
  Let $t \in \lang{A} \cap \lang{B}$.
  Then $t \ared_\delta^* q$ for some $q \in F$
  and $t \ared_{\delta'}^* q'$ for some $q' \in F'$.
  But then $t \ared_\gamma^* (q,q')$ 
  and hence $(q,q') \in F \times F'$ is reachable in $A\times B$; this contradicts the assumption. 

  For the `only if'-part, assume, for a contradiction, that $t \ared_{\gamma}^* (q,q')$ in $A \product B$ with $q \in F$ and $q' \in F'$. 
  Then this directly translates to $t \ared_\delta^* q$ in $A$ and $t \ared_{\delta'}^* q'$ in $B$.
  Hence $t \in \lang{A}$ and $t \in \lang{B}$,
  contradicting $\lang{A} \cap \lang{B} = \varnothing$.
\end{proof}

We can use Lemma~\ref{lem:intersection}
to check that the language $\lang{A}$ of an automaton $A$ does not contain normal forms.
To this end, we only need an automaton $B$ that accepts all ground normal forms.
Then $\lang{A}$ contains no normal forms if $\lang{A} \cap \lang{B} = \varnothing$.
\begin{example}
  The reachable states of the product $A_{S} \product \overline{B_{S}}$ 
  of the automata $A_S$ from Example~\ref{ex:automaton:s}
  and $\overline{B_{S}}$ from Example~\ref{ex:redex:s} are
  $(0,0), (1,1), (2,2), (2,1), (3,3), (3,2), (2,3), (4,3)$.
  The only state $(q,q')$ such that $q$ is accepting in $A_S$ is $(4,3)$ and
  $3$ is not an accepting state of $\overline{B_S}$.
  The conditions of Lemma~\ref{lem:intersection} are fulfilled and 
  hence $\lang{A_S} \cap \lang{\overline{B_S}} = \varnothing$.
  Recall that $\overline{B_{S}}$ accepts all ground normal forms, 
  and thus every term accepted by $A_S$ contains a redex.
\end{example}

\begin{encoding}[SAT encoding of empty intersection]\label{sat:intersection}
  Let $A = \tuple{Q,\Sigma,F,\delta}$ and $B = \tuple{Q',\Sigma,F',\delta'}$ be tree automata.
  Let $A\times B = \tuple{Q \times Q',\Sigma,\emptyset,\gamma}$.

  First, note that reachability of all states in the automata $A$ and $B$
  does not imply that all states in the product automaton $A\times B$ are reachable. 
  As a consequence, we have to `compute' the set of reachable states using Boolean satisfiability problems. 
  For this purpose, we reformulate Lemma~\ref{lem:intersection} in the following equivalent way:
  \ldots, then $\lang{A} \cap \lang{B} = \varnothing$
  if and only if
  there exists a set of states $P \subseteq Q \times Q'$
  such that
  \begin{enumerate}[label=(\roman*)]
    \item $P$ is \emph{closed under transitions} in $A \times B$, that is,
      $q \in P$ whenever $f(q_1,\ldots,q_n) \ared_\gamma q$
      for some $q_1,\ldots,q_n \in P$, and
    \item for all $(q,q') \in P$ it holds that $q \in F$ implies $q' \not\in F'$.
      \smallskip
  \end{enumerate}
  Note that this statement is equivalent to Lemma~\ref{lem:intersection}.
  Item (i) guarantees that $P$ contains all reachable states,
  and hence (ii) is required for at least the reachable states.
  Thus the conditions imply those of Lemma~\ref{lem:intersection}.
  On the other hand, we can take $P$ to be precisely the set of reachable states,
  and then the conditions are exactly those of Lemma~\ref{lem:intersection}.
  
  The idea is that the reformulated statement has a much more efficient
  encoding as Boolean satisfiability problem. 
  We only need to encode the closure of $P$ under transitions, 
  but there is no longer the need for encoding the property that $P$ is the smallest such set 
  (which is a statement of second-order logic).
  
  Assume that we have a SAT encoding of the automata $A$ and $B$ as in Remark~\ref{sat:automaton};
  we write $v_{A,\ldots}$ for the variables encoding $A$,
  and $v_{B,\ldots}$ for the variables encoding $B$.
  To represent the set $P$, we introduce variables
  $p_{(q,q')}$ for every $(q,q') \in Q \times Q'$ and the properties are translated into the following formulas:
  \begin{enumerate}[label=(\roman*)]
    \item for every $f\in\Sigma$ with arity $n$ and $(q_1,q'_1),\ldots,(q_n,q'_n),(q,q') \in Q\times Q'$:
      \begin{align*}
        (v_{A,f,q_1,\ldots,q_n,q} \;\wedge\; v_{B,f,q'_1,\ldots,q'_n,q'} 
         \;\wedge\; p_{(q_1,q_1')} \;\wedge\; p_{(q_2,q_2')} \;\wedge\; \ldots \;\wedge\; p_{(q_n,q_n')}) \;\to\; p_{(q,q')}
      \end{align*}
      
    \item for every $(q,q') \in Q\times Q'$: $(p_{(q,q')} \wedge v_{A,F,q}) \to \neg v_{B,F',q'}$.
      \smallskip
  \end{enumerate}
  Each of these formulas simplifies to a single clause (a disjunction of literals).
  
  We remark that we will employ this translation for the case that $B$ consists of the set of terms 
  containing redex occurrences with respect to a given rewrite system $R$.
  Then $B$ is known and fixed before the translation to a satisfiability problem.
  As a consequence, we know the truth values of $v_{B,f,q'_1,\ldots,q'_n,q'}$ and $v_{B,F',q'}$
  in the formulas above, and can immediate skip the generation
  of formulas that are trivially true (the large majority in case (i)).
\end{encoding} 

\begin{encoding}[Complexity of the SAT encoding]\label{rem:complexity}
  While the encoding is efficient for string rewriting systems,
  it suffers from an `encoding explosion' for term rewriting systems
  containing symbols of higher arity.
  The problem arises from the SAT encoding of the recursive computation of the interpretation of terms
  (described in Remark~\ref{sat:closed}).
  The computation of the interpretation of a term $f(t_1,\ldots,t_n)$
  containing $m$ variables needs $O(|Q|^{m+n+1})$ clauses:
  $m$ for the quantification over the variable assignments,
  $n$ for the possible states of $t_1,\ldots,t_n$
  and $1$ for the possible result states.
  To some extend, this problem can be overcome by `uncurrying' the system,
  that is, for every symbol $f$ of arity $n > 2$ we introduce 
  fresh symbols $f_1,\ldots,f_{n-1}$ of arity $2$ and
  then replace all occurrences of $f(t_1,\ldots,t_n)$ by $f_{n-1}(\ldots f_2(f_1(t_1,t_2),t_3)\ldots,t_n)$.
  This transformation helps to bring the complexity down to $O(|Q|^{m+3})$.
  Nevertheless, for example for the S-rule, which only contains binary symbols, we still need $|Q|^6$ clauses.
  We note that after the uncurrying transformation, an automaton with more states may be needed
  to generate `the same' language.
\end{encoding}

\subsection{Disproving Weak Normalization}

We are now ready to use Theorem~\ref{thm:wn} in combination with tree automata
for automatically disproving weak normalization.
The language $L$ in the theorem is described by a non-deterministic tree automaton.
In the previous section, we have seen how the relevant properties of
tree automata can be checked.
Here, we summarize the procedure:

\begin{technique}\label{tec:aut:wn}
  Let $R$ be a left-linear TRS. We search for a tree automaton $A = \tuple{Q,\Sigma,F,\delta}$
  such that $\lang{A}$ fulfills the properties of Theorem~\ref{thm:wn}:
  \begin{enumerate}[label=(\emph{\roman*})]
    \item We guarantee $\lang{A} \cap \NF(\to) = \emptyset$
      by the following steps:
      \begin{itemize}
        \item We employ Lemma~\ref{lem:redex:automaton} to construct a deterministic, complete automaton $B = \tuple{Q,\Sigma,F,\delta}$ 
          that accepts the set of terms containing redex occurrences with respect to $R$.
        \item Then the automaton $\overline{B} = \tuple{Q,Q \setminus \Sigma,F,\delta}$ accepts all ground normal forms.
        \item We use Lemma~\ref{lem:intersection} to check that $\lang{A} \cap \lang{\overline{B}} = \varnothing$ (thus $\lang{A} \subseteq \lang{B}$).
      \end{itemize} 
    \item We guarantee that $\lang{A}$ is closed under $\to$ by Lemma~\ref{lem:closed}.
    \item We use Lemma~\ref{lem:empty} to ensure that $\lang{A} \ne \emptyset$.
  \end{enumerate}
  These conditions can be encoded as satisfiability problems 
  as described in Remarks~\ref{sat:automaton}, \ref{sat:reachable}, \ref{sat:intersection}, \ref{sat:empty} and \ref{sat:closed}.
  This enables us to utilize SAT solvers to search for suitable automata $A$.
\end{technique}

\begin{encoding}
  We note the combination of Technique~\ref{tec:aut:wn} with Remark~\ref{rem:collapsing} is complete with respect to 
  disproving weak normalization on regular languages:
  if there exists a regular language $L$ fulfilling the conditions of Theorem~\ref{thm:wn},
  then weak normalization can be disproved using Technique~\ref{tec:aut:wn} after eliminating collapsing rules as in Remark~\ref{rem:collapsing}.

  This can be seen as follows.
  In the work~\cite{endr:vrij:wald:2009,endr:vrij:wald:2010} a generalized method
  for ensuring closure of the language of automata under rewriting has been proposed.
  Thereby the condition $\ell\alpha \ared_A^* q \;\implies\; r\alpha \ared_A^* q$ of Lemma~\ref{lem:closed} 
  is weakened to 
  \begin{align}
    \ell\alpha \ared_A^* q \;\implies\; r\alpha \ared_A^* p \quad \text{for some $p \ge q$}\;. \label{eq:quasi:order}
  \end{align}
  Here $\le$ is a quasi-order on the states $Q$
  and the automaton must be monotonic with respect to this order, see Definition~\ref{def:monotonic}.
  The monotonicity guarantees that the language of the automaton is closed under rewriting.

  In~\cite{felg:thie:2014} it has been shown that this monotonicity property 
  is strong enough to characterize and decide the closure of the regular languages under rewriting.
  In particular, the language of a deterministic tree automaton is closed under rewriting 
  if and only if there exists such a monotonic quasi-order on the states.  

  Let $R$ be a TRS such that there exists a regular language that satisfies the conditions of Theorem~\ref{thm:wn}.
  Then there exists a deterministic, complete automaton $A$ accepting this language
  and a quasi-order $\le$ on the states satisfying \eqref{eq:quasi:order} and monotonicity.
  Let $R'$ be obtained from $R$ by eliminating collapsing rules as described in Remark~\ref{rem:collapsing}.
  We obtain a non-deterministic automaton $A'$ 
  that fulfils the requirements of Technique~\ref{tec:aut:wn} for $R'$
  by closing the transition relation of $A$ under $\le$:  
  we add $f(q_1,\ldots,q_n) \ared q$ whenever $q \le p$ and $f(q_1,\ldots,q_n) \ared p$.
  As a consequence of monotonicity and using induction over the term structure,
  we obtain for all terms $t \in \ter{\Sigma}{\vars}$ with $t \not\in \vars$
  and $\alpha : \vars \to Q$ that 
  \begin{itemize}
    \item [($\star$)] $t \ared_{A'}^* q$ if and only if $t \ared_{A}^* p$ for some $p$ with $q \le p$.
  \end{itemize}
  As a consequence of ($\star$) and monotonicity we have $\lang{A'} = \lang{A}$
  (roughly speaking, if $q \le p$, then $q$ accepts a subsets of the language of $p$).
  Thus $\lang{A'} \cap \NF(\to) = \emptyset$ and $\lang{A'} \ne \emptyset$ are guaranteed.
  Finally, we show that Lemma~\ref{lem:closed} is applicable for $R'$ and $A'$.
  Let $\ell \to r \in R'$, $\alpha : \vars \to Q$ and $q \in Q$ such that $\ell\alpha \ared_{A'}^* q$.
  Then by ($\star$) we get $\ell\alpha \ared_{A}^* q'$ for some $q' \in Q$ with $q \le q'$.
  By \eqref{eq:quasi:order} we have that $r\alpha \ared_{A}^* q''$ for some $q'' \in Q$ with $q' \le q''$.
  Again by ($\star$) we obtain that $r\alpha \ared_{A}^* q$.
  Hence the conditions of Technique~\ref{tec:aut:wn} are fulfilled for $R'$ and $A'$.
\end{encoding}

\begin{example}\label{ex:lr}
  We consider the following string rewriting system:
  \begin{align*}
   aL &\to La & 
   Ra &\to aR &
   bL &\to bR &
   Rb &\to Lab
  \end{align*}
  This rewrite system is neither strongly nor weakly normalizing,
  but does not admit looping reductions, that is, reductions of the form $s \to^+ \ell s r$.
  An example of an infinite reduction is:
  \begin{align*}
    bLb \to bRb \to bLab \to bRab \to baRb \to baLab \to bLaab \to bRaab \to \cdots 
  \end{align*}
  It is easy to check that the automaton $A_{LR}$ from Example~\ref{ex:automaton:lr}
  fulfills the requirements of Technique~\ref{tec:aut:wn}.
  Hence, the system is not weakly normalizing.
\end{example}

\begin{example}\label{ex:S}
  We consider the $S$-rule from combinatory logic:
  \begin{align*}
    \ap{\ap{\ap{\combS}{x}}{y}}{z} \to \ap{\ap{x}{z}}{\ap{y}{z}}
  \end{align*}  
  For the $S$-rule it is known that there are no reductions $t \to^* C[t]$ for ground terms $t$, see~\cite{wald:2000}.
  For open terms $t$ the existence of reductions $t \to^* C[t\sigma]$ is open.
  
  It is straightforward to verify that the automaton $A_S$ from Example~\ref{ex:automaton:s}
  fulfills the requirements of Technique~\ref{tec:aut:wn},
  and hence the $S$-rule,
  and in particular the term
  $\combS\combS\combS(\combS\combS\combS)(\combS\combS\combS(\combS\combS\combS))$,
  are not weakly normalizing.
  
\end{example}

\begin{example}\label{ex:delta}
  The $\delta$-rule (known as Owl in Combinatory Logic) is even simpler:
  \begin{align*}
    \delta x y \to y(xy), && \text{ or equivalently } && \ap{\ap{\delta}{x}}{y} \to \ap{y}{\ap{x}{y}}\;.
  \end{align*}
  As shown in~\cite{starling:owl:2015}, this rule does not admit loops, , and
  the techniques in~\cite{emme:enge:gies:2012} fail for this system. 
  The Technique~\ref{tec:aut:wn} can be applied to automatically disprove weak-normalization for this rule.
  Our tool finds a tree automaton that has $3$ states and
  accepts the language of all ground terms with two occurrences of $\delta\delta$.
  In fact, this is precisely the language of non-terminating ground $\delta$-terms, see further~\cite{starling:owl:2015}.
\end{example}

In all examples until now infinite reductions exist of the regular shape based on
$t \sigma^n \tau$ rewriting to a term having $t \sigma^{f(n)} \tau \mu$ as a sub-term, 
for every $n$, for some term $t$ and substitutions $\sigma, \tau, \mu$ and an
ascending linear function $f$. For instance, the $S$ rule (Example \ref{ex:S}) admits
an infinite reduction implied by $t \sigma^n \tau$ rewriting to a super-term of 
$t \sigma^{n+1} \tau$, for $t = a(x,x)$, $\sigma(x) = Ax$, $\tau(x) = SA(SAA)$, for 
$A = SSS$. 
\begin{example}\label{ex:lr:aaa}
The following example does not have an infinite reduction of this regular
shape, neither of the more general patterns from ~\cite{oppe:2008} and
\cite{emme:enge:gies:2012}. 
  \begin{align*}
   aL &\to La & 
   Raa &\to aaaR &
   bL &\to bRa &
   Rb &\to Lb &
   Rab &\to Lab.
  \end{align*}
In this system $bRa^nb$ rewrites to $bRa^{f(n)}b$ for $f$ defined by $f(2n) = 3n+1$
and $f(2n+1) = 3n+2$ for all $n$. This obviously yields an infinite reduction, but
$f$ is not linear, by which this example is outside the scope of ~\cite{oppe:2008}
and \cite{emme:enge:gies:2012}. In our approach a proof of non-termination  and even 
non-weak-normalization is extremely simple: $b a^* (L \mid R) a^* b$ is non-empty,
closed under rewriting and does not contain normal forms.
\end{example}

\subsection{Disproving Strong Normalization}

For disproving strong normalization based on Theorem~\ref{thm:sn}, the only difference with Technique~\ref{tec:aut:wn}
is that checking  that $L$ is closed under $\to$ by Lemma~\ref{lem:closed} has to be replaced by
checking  that $L$ is weakly closed under $\to$ by Lemma~\ref{lem:wclosed}.
The technique is applicable to string and term rewriting systems, 
and can be automated as described in Technique~\ref{tec:aut:wn} and Remark~\ref{sat:wclosed}.

\begin{example}\label{ex:lr:even}
  Let us consider the rewrite system
  \begin{align*}
   aaL &\to Laa & 
   Ra &\to aR &
   bL &\to bR &
   Rb &\to Lab &
   Rb &\to aLb
  \end{align*}
  This system is non-looping and non-terminating.
  However, in contrast to Example~\ref{ex:lr}, this system is weakly normalizing, since by always choosing the fourth rule the 
last rule is never used, and the first four rules are terminating.
  Hence the Technique~\ref{tec:aut:wn} is not applicable for this TRS.
However, the following pattern extends to an infinite reduction
\[ bR a^{2n} b \to^{2n} b a^{2n} Rb \to b a^{2n} Lab \to^{n} bL a^{2n+1} b \to \]
\[ bR a^{2n+1} b \to^{2n+1} b a^{2n+1} Rb \to b a^{2n+2} Lb \to^{n+1} bL a^{2n+2} b \to  bR a^{2n+2} b. \]

\noindent
\begin{minipage}{6cm}
Instead of finding this pattern explicitly, non-termination is also concluded from checking that $b (aa)^* (L \mid R \mid aR) a^* b$
describes a language satisfying all conditions from Theorem~\ref{thm:sn}. A corresponding 
automaton is given on the right.

\end{minipage}
\begin{minipage}{7cm}
\quad\includegraphics[scale=0.18]{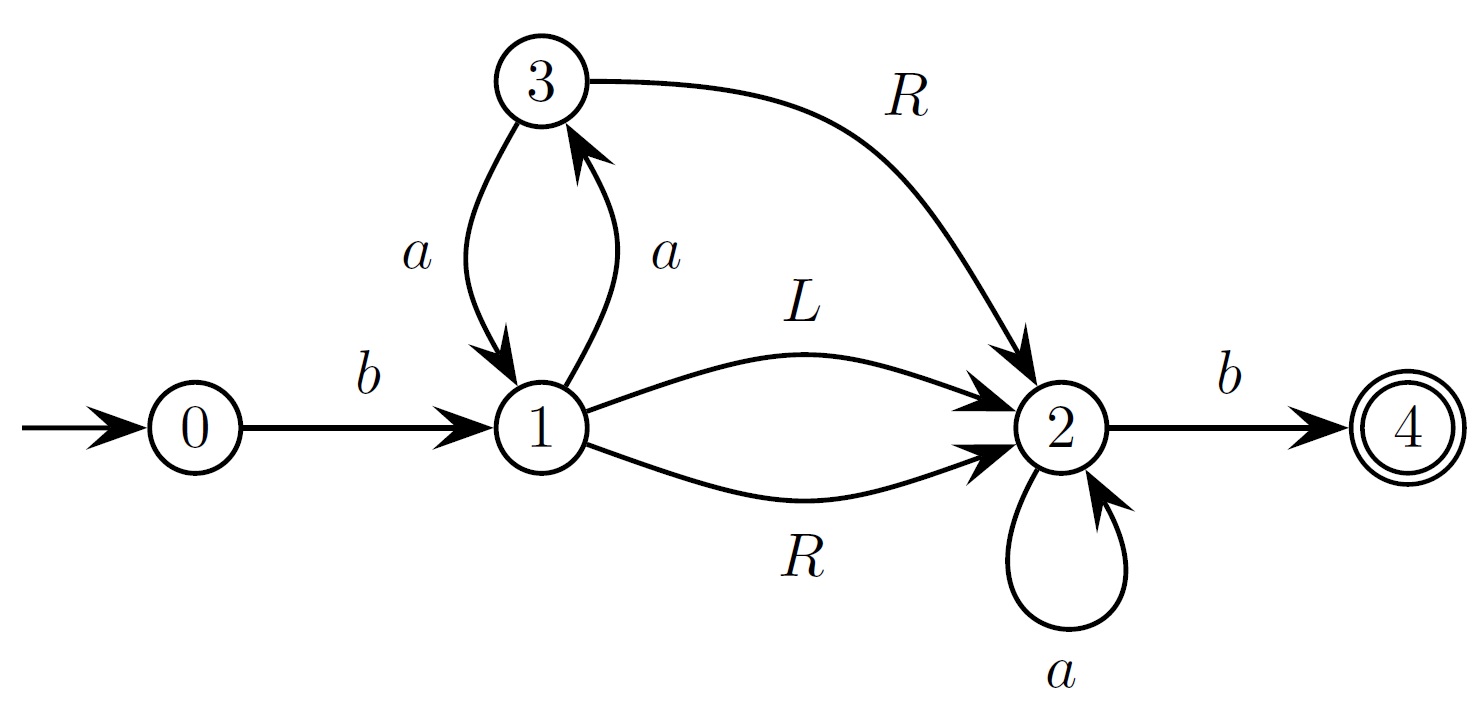}
\end{minipage}
\smallskip

\noindent  
The conditions of Theorem~\ref{thm:sn} are now 
checked as follows. Non-emptiness follows from
the existence of a path from state 0 to state 4. Every path from 0 to 4 either contains one of the
patterns $aaL$, $Ra$, $bL$ or $Rb$, so it remains to show weakly closedness under rewriting by
Lemma~\ref{lem:wclosed}. In the setting of string automata this means that for every left hand
side $\ell$ and every $\ell$-path from a state $p$ to a state $q$ we should find a $u$-path
from $p$ to $q$ for a string $u$ such that $\ell$ rewrites to $u$ in one or more steps.
For $\ell = aaL$ the only path is from 1 to 2, for which there is also an $Laa$ path.
For $\ell = Ra$ there is a path from 1 to 2, for which there is also an $aR$ path via 3.
The only other option for $\ell = Ra$ is a path from 3 to 2, for which there is also an $aR$ path 
via 1.
For $\ell = bL$ the only path is from 0 to 2, for which there is also a $bR$ path.
Finally, for $\ell = Rb$ there is a path from 1 to 4, for which there is also an $Lab$ path
and a path from 3 to 4, for which there is also an $aLb$ path, by which all conditions have
been verified. Note that for the $Rb$-path  from 1 to 4 it is essential to use the 4th rule,
while for the $Rb$-path  from 3 to 4 it is essential to use the last rule. 
\end{example}

This example can also be treated by the technique introduced in the following section.

\section{Improved Methods for Disproving Strong Normalization}
\label{secimp}
In this section, we improve the method for proving non-termination.
The methods introduced so far are not able to handle the following example.

\begin{example}\label{ex:lr:improved}
  We consider the following string rewriting system:
  \begin{align*}
   zL \to Lz &&
   Rz \to zR &&
   zLL \to zLR &&
   RRz \to LzRz
  \end{align*}
  This rewrite system is weakly normalizing but not strongly normalizing.
  The non-termination criteria introduced in the previous sections
  are not applicable for this system.
  Let us consider the first steps of an infinite reduction:
  \begin{align*}
    &\ \underline{zLL}zzRz\\
    \to&\ zLRzzRz \to zLzRzRz \to zLzzRRz\\
    \to&\ zLzzLzRz \to zLzLzzRz \to \underline{zLL}zzzRz\\
    \to&\ldots
  \end{align*}
  Note the underlined occurrences of $zLL$.
  Due to the rule $zLL \to zLR$,
  the word $zL$ is the marker for `turning' on the left;
  However, this marker $zL$ is itself a redex.
  To obtain an infinite reduction, this marker must not be reduced.
\end{example}

The idea for proving non-termination of systems like Example~\ref{ex:lr:improved}
is to let the automaton determine which redex to contract.
To this end, we introduce a `redex selection' function
\begin{align*}
  \rmap : Q \to \pow{R}
\end{align*}
that maps states of the automaton to sets of rules that may be contracted
at the corresponding position in the term.
The idea is that a redex $\ell\sigma$ in a term $C[\ell\sigma]$ with respect to a rule $\ell \to r$
is allowed to be contracted
if $\ell\sigma \ared^* q$ with $\ell\to r \in \rmap(q)$.
In this way, the automaton determines what redexes are to be contracted.
Then the automaton only needs to fulfill the property $\ell\alpha \ared_A^* q \implies r\alpha \ared_A^* q$
for the \emph{selected rules}:
\begin{align*}
  \ell \to r \in \rmap(q) \;\wedge\; \ell\alpha \ared_A^* q \quad\implies\quad 
  r\alpha \ared_A^* q
\end{align*}
for every rule $\ell \to r \in R$, state $q\in Q$ and $\alpha : \vars \to Q$.
Moreover, as proposed in~\cite{endr:vrij:wald:2009,endr:vrij:wald:2010,felg:thie:2014},
we weaken the requirement $r\alpha \ared_A^* q$ 
to $r\alpha \ared_A^* p$ for some $p \ge q$.
Here $\le$ is a quasi-order on the states and 
the automaton must be monotonic with respect to this order (see Definition~\ref{def:monotonic}).
The monotonicity guarantees that the language of the automaton is closed under rewriting.
For the present paper, this closure property holds only for the rules selected by $\rmap$.

\begin{definition}[Monotonicity]\label{def:monotonic}
  A tree automaton $A = \tuple{Q,\Sigma,F,\delta}$ 
  is \emph{monotonic} with respect to a quasi\nb-order $\le$ on the states $Q$
  if the following properties hold:
  \begin{enumerate}[label=(\emph{\roman*})]
    \item 
      For all $f \in \Sigma$ with arity $n$ and states $a_1 \le b_1$, $a_2 \le b_2$, \ldots, $a_n \le b_n$,  
      it holds
      \begin{align*}
        f(a_1,\ldots,a_n) \ared_A q \quad\implies\quad
        f(b_1,\ldots,b_n) \ared_A p \;\text{ for some $p\in Q$ with $q \le p$}
      \end{align*}
    \item 
      Whenever $q \in F$ and $q \le p$, then $p \in F$.
  \end{enumerate}
\end{definition}

The following lemma is immediate by induction on the size of the context.
\begin{lemma}\label{lem:monotonicity}
  Let $A = \tuple{Q,\Sigma,F,\delta}$ be a tree automaton
  that is monotonic with respect to a quasi\nb-order $\le$ on the states $Q$.
  Let $a,b \in Q$ with $a \le b$.
  Then for all ground contexts $C$ we have that $C[a] \ared a'$ with $a' \in Q$ implies that $C[b] \ared b'$ for some $b' \in Q$ with $a' \le b'$.
\end{lemma}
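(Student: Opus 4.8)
The plan is to prove the lemma by induction on the structure (equivalently, the size) of the ground context $C$, exploiting the bottom\nb-up nature of the automaton. Throughout I read $\ared$ in the statement as $\ared^*$, since a single transition step cannot in general bring a context of depth greater than one down to a state.

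\emph{Base case.} If $C$ is the trivial context consisting only of the hole, then $C[a] = a$. Since every state is a normal form with respect to $\ared$ (transition rules have the shape $f(q_1,\ldots,q_n) \ared q$, so a bare state is never a left\nb-hand side), the only reduct is $a' = a$. Taking $b' = b$ we get $a' = a \le b = b'$.

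\emph{Inductive step.} Write $C = f(D_1,\ldots,D_n)$ where exactly one $D_j$ is a context containing the hole (hence strictly smaller than $C$) and the remaining $D_k$ with $k \ne j$ are plain ground terms over $\Sigma$. The key auxiliary fact, which I would state and prove separately, is that any reduction $f(s_1,\ldots,s_n) \ared^* q'$ to a state must factor through states at every argument position: there exist states $q_1,\ldots,q_n$ with $s_i \ared^* q_i$ for each $i$ and a single closing step $f(q_1,\ldots,q_n) \ared q'$. This holds because $q'$ is a normal form, so no further step follows the step that produces it, and a rule with root $f$ can only fire once all arguments of $f$ are already states. Applying this fact to $C[a] = f(D_1[a],\ldots,D_n[a]) \ared^* a'$ yields states $q_1,\ldots,q_n$ with $D_i[a] \ared^* q_i$ and $f(q_1,\ldots,q_n) \ared a'$.

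Now I lift each argument reduction from $a$ to $b$. For $i = j$, the induction hypothesis applied to the smaller context $D_j$ gives $D_j[b] \ared^* q_j'$ for some $q_j' \ge q_j$. For $i \ne j$, the context $D_i$ contains no hole, so $D_i[a] = D_i = D_i[b]$ and we may take $q_i' = q_i$. Thus $D_i[b] \ared^* q_i'$ with $q_i \le q_i'$ for every $i$. By monotonicity property (i) of Definition~\ref{def:monotonic}, applied to the transition $f(q_1,\ldots,q_n) \ared a'$ together with the inequalities $q_i \le q_i'$, there is a state $b'$ with $a' \le b'$ and $f(q_1',\ldots,q_n') \ared b'$. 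Concatenating the reductions, $C[b] = f(D_1[b],\ldots,D_n[b]) \ared^* f(q_1',\ldots,q_n') \ared b'$, which completes the induction. I expect the only slightly delicate point to be the auxiliary factorization fact, which is routine for bottom\nb-up tree automata; note that monotonicity property (ii) is not needed here (it enters only when relating acceptance, i.e.\ final states).
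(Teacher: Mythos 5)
Your proof is correct and follows exactly the route the paper indicates: the paper dismisses this lemma as ``immediate by induction on the size of the context,'' and your argument is precisely that induction, spelled out with the standard factorization of a bottom\nb-up run at the root symbol and an appeal to monotonicity property (i). Your reading of $\ared$ as $\ared^*$ is also the intended one, as the paper's own use of the lemma in the proof of Theorem~\ref{thm:sn:improved} confirms.
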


\begin{definition}[Runs]
  Let $A = \tuple{Q,\Sigma,F,\delta}$ be a tree automaton and $t \in \ter{\Sigma}{\emptyset}$.
  A \emph{run} of $A$ on $t$ is a function $\rho : \pos{t} \to Q$
  such that for every $p \in \pos{t}$ and $t(p) = f \in \Sigma$ 
  there is a rule $f(\rho(p1),\ldots,\rho(pn)) \ared \rho(p)$ in $\delta$.
  The run $\rho$ is accepting if $\rho(\varepsilon) \in F$.
\end{definition}

Note that there is a direct correspondence between runs on $t$ and rewrite sequences $t \ared^* q$.
We are now ready to state the generalized theorem for disproving strong normalization.
\begin{theorem}\label{thm:sn:improved}
  Let $R$ be a left-linear TRS. Let
  $A = \tuple{Q,\Sigma,F,\delta}$ be a tree automaton with $\lang{A} \ne \emptyset$,
  $\le$ a quasi-order on the states $Q$, and
  $\rmap : Q \to \pow{R}$ a function, called \emph{redex selection function}.
  Assume that the following properties hold:
  \begin{enumerate}[label=(\alph*)]
    \item \label{improved:monotonicity}
      The automaton $A$ is monotonic with respect to $\le$.
    \item \label{improved:model}
      For every state $q\in Q$, rule $\ell \to r \in R$ and $\alpha : \vars \to Q$ it holds that:
        \begin{align*}
          \ell\alpha \ared_A^* q \;\wedge\; \ell \to r \in \rmap(q) \quad\implies\quad 
          &(\exists p \in Q.\; q \le p \;\wedge\; r\alpha \ared_A^* p)
          \;\vee\;\\
          &(\exists r' \trianglelefteq r. \;\exists q' \in F.\; r'\!\alpha \ared_A^* q')
        \end{align*}
    \item \label{improved:reducible}
      For every term $t \in \ter{\Sigma}{\emptyset}$
      and accepting run $\rho$ on $t$ there is a position $p$
      such that $t|_p$ is an instance of the left-hand side of a rule $\ell \to r \in \rmap(\rho(p))$.
  \end{enumerate}
  Then $R$ is not strongly normalizing.
\end{theorem}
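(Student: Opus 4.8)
The plan is to construct, from a non-empty accepted term together with a carefully chosen accepting run, an infinite rewrite sequence, and then invoke Theorem~\ref{thm:sn} (or rather build the sequence directly). Since $\lang{A}\ne\varnothing$, pick $t_0\in\lang{A}$ with an accepting run $\rho_0$. The key invariant I would maintain along the sequence is: at stage $i$ we have a ground term $t_i$ together with an accepting run $\rho_i$ of $A$ on $t_i$. Starting from this invariant, condition~\ref{improved:reducible} hands us a position $p$ and a rule $\ell\to r\in\rmap(\rho_i(p))$ such that $t_i|_p$ is an instance $\ell\sigma$ of $\ell$. Write $t_i = C[\ell\sigma]$ and let $q=\rho_i(p)$, so $\ell\sigma\ared_A^* q$ and $C[q]\ared_A^* \rho_i(\varepsilon)\in F$.

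The next step is to move from the substitution $\sigma:\vars\to\ter{\Sigma}{\emptyset}$ into $Q$ to the state-assignment required by~\ref{improved:model}. Because $R$ is left-linear, $\ell$ has no repeated variables, so the run $\rho_i$ restricted to the subterm $\ell\sigma$ induces an assignment $\alpha:\vars\to Q$ with $\sigma(x)\ared_A^*\alpha(x)$ for every $x\in\varsof{\ell}$ and $\ell\alpha\ared_A^* q$ (this is exactly the argument already used in the proof of Lemma~\ref{lem:wclosed}). Now apply~\ref{improved:model}. There are two cases. In the first case there is $p'\in Q$ with $q\le p'$ and $r\alpha\ared_A^* p'$; then $r\sigma\ared_A^* p'$, and by monotonicity (Lemma~\ref{lem:monotonicity}) the context run extends: $C[p']\ared_A^* q'$ for some $q'$ with $\rho_i(\varepsilon)\le q'$, and by clause~(ii) of Definition~\ref{def:monotonic}, $q'\in F$. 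Hence $t_{i+1}:=C[r\sigma]$ has an accepting run $\rho_{i+1}$, and $t_i = C[\ell\sigma]\to_R C[r\sigma]=t_{i+1}$ is a genuine rewrite step; the invariant is restored and we may continue. In the second case there is $r'\trianglelefteq r$ and $q'\in F$ with $r'\alpha\ared_A^* q'$; then $r'\sigma\ared_A^* q'\in F$, so the subterm $r'\sigma$ of $r\sigma$ is itself accepted by $A$. Set $t_{i+1}:=r'\sigma$ together with the corresponding accepting run; again $t_i = C[\ell\sigma]\to_R C[r\sigma]\trianglerighteq r'\sigma = t_{i+1}$, where the step is a real rewrite step and passing to the subterm preserves ground-ness and acceptance. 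In either case we have produced $t_{i+1}$ with $t_i\to_R^+ t_{i+1}$ — actually $t_i\to_R t_{i+1}$ in the first case and $t_i\to_R C[r\sigma]$ with $t_{i+1}$ a subterm in the second — and $t_{i+1}$ is accepted with an accepting run. Iterating yields an infinite rewrite sequence $t_0\to_R^+ t_1\to_R^+ t_2\to_R^+\cdots$, so $R$ is not strongly normalizing (or one packages $\{t_i\}$ and quotes Theorem~\ref{thm:sn}, noting that $t\to_R^+ u$ via a step followed by a subterm projection still gives a $\to_R^+$-chain once one observes each $t_{i+1}$ is reachable; more cleanly, in case two one notes $r'\sigma$ is a reduct of a reduct of $t_i$, hence $t_i\to_R^* t_{i+1}$ with at least one step).

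The main obstacle I anticipate is handling the second disjunct of~\ref{improved:model} correctly so that the infinite sequence is actually an infinite $\to_R$-sequence and not merely an infinite sequence of ``rewrite-then-shrink'' operations that could in principle stall — one must check that in the subterm case a real rewrite step is taken before projecting (it is: $C[\ell\sigma]\to_R C[r\sigma]$, and $r'\sigma$ is a subterm of $C[r\sigma]$), so non-termination genuinely follows. A secondary subtlety is the passage from $\sigma$ to $\alpha$: it relies essentially on left-linearity of $R$ and on there being a run (not just reachability) so that the states assigned to the variable-subterms of $\ell\sigma$ are consistent; this is routine but should be spelled out exactly as in Lemma~\ref{lem:wclosed}. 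Everything else — the use of monotonicity to lift the context run, and clause~(ii) of Definition~\ref{def:monotonic} to keep the root state accepting — is a direct application of Lemma~\ref{lem:monotonicity}.
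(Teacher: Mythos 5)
Your proposal is correct and follows essentially the same route as the paper's proof: extract a selected redex from an accepting run via condition~(c), use left-linearity to turn the run on $\ell\sigma$ into a state assignment $\alpha$, split on the two disjuncts of~(b) --- lifting the context run with Lemma~\ref{lem:monotonicity} and Definition~\ref{def:monotonic}(ii) in the first case, passing to the accepted subterm $r'\sigma$ in the second --- and conclude from the resulting infinite chain. The paper packages the conclusion by applying Theorem~\ref{thm:sn} to the composite relation ${\to_R}\circ{\trianglerighteq}$ (rewrite step followed by taking a subterm), which is exactly the iteration you carry out by hand. One correction to an aside: in the second case $t_{i+1}=r'\sigma$ is a \emph{subterm} of the reduct $C[r\sigma]$, not ``a reduct of a reduct'', so the parenthetical claim that $t_i\to_R^+ t_{i+1}$ is false; what you actually have is $t_i\;({\to_R}\circ{\trianglerighteq})\;t_{i+1}$, and the correct justification --- which your main text and the paper both invoke without further proof --- is the standard fact that strong normalization of $\to_R$ implies strong normalization of ${\to_R}\circ{\trianglerighteq}$, using ${\trianglerighteq}\circ{\to_R}\subseteq{\to_R}\circ{\trianglerighteq}$, a consequence of closure of rewriting under contexts.
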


\newcommand{\treq}{\trianglerighteq}
\begin{proof}
  Assume that the conditions of the theorem are fulfilled. 
  To disprove strong normalization of $\to$
  it suffices to disprove strong normalization of $\to \circ \treq$
  where $\treq$ is the (non-strict) sub-term relation.
  We show that $\lang{A}$ and $\to \circ \treq$ fulfill the requirements of Theorem~\ref{thm:sn}.
  Let $t \in \lang{A}$.
  Then there exists an accepting run $\rho$ of $A$ on $t$.
  By item \ref{improved:reducible} there exists a position $p \in \pos{t}$
  and a rule $\ell \to r \in \rmap(\rho(p))$ such that $t|_p$ is an instance of $\ell$.
  Then $t|_p = \ell\sigma$ for some substitution $\sigma$.
  By left-linearity, we can define $\alpha : \varsof{\ell} \to Q$ by
  $\alpha(x) = \rho(pp')$ whenever $\ell|_{p'} \in \vars$.
  Then $\ell\alpha \ared^* \rho(p)$ and we distinguish cases according to item \ref{improved:model}:
  \begin{enumerate}
    \item 
      There exists $q \in Q$ with $\rho(p) \le q$ and $r\alpha \ared^* q$.
      We know that $t[\ell\sigma]_p = t \ared^* \rho(\varepsilon)$ and define $t' = t[r\sigma]_p$. 
      Note that $t \to t'$ and $t \to \circ \treq t'$.
      We have $t \ared t[\rho(p)] \ared \rho(\varepsilon)$ and $\rho(p) \le q$.
      By Lemma~\ref{lem:monotonicity} we have $t[q] \ared q'$ for some $q' \ge \rho(\varepsilon)$
      and by monotonicity~$q' \in F$.
      Hence $t' = t[r\sigma]_p \ared t[q] \ared^* q'$.
      Thus $t' \in \lang{A}$ and $t \to \circ \treq t'$.
    \item 
      There exist $r' \trianglelefteq r$ and $q \in F$ such that $r'\alpha \ared^* q$.
      Then $r'\sigma \ared^* q$ and hence $r'\sigma \in \lang{A}$.
      Moreover, $t \to\circ\treq r'\sigma$.
  \end{enumerate}
  This shows that $\lang{A}$ contains no normal forms and is weakly closed under $\to\circ\treq$.
  By Theorem~\ref{thm:sn}, $\to\circ\treq$ is not strongly normalizing and hence $\to$ is not strongly normalizing.
\end{proof}

\begin{encoding}[SAT encoding of the conditions of Theorem~\ref{thm:sn:improved}]\label{sat:improved}
  To encode the conditions of Theorem~\ref{thm:sn:improved} as a Boolean satisfiability problem
  we proceed as follows.
  For every $q,q' \in Q$, we introduce a fresh variable 
  \begin{align*}
    &v_{\le,q,q'} && \text{with the intended meaning: $v_{\le,q,q'}$ is true $\iff$ $q \le q'$\;,}
  \end{align*}
  and for every $q \in Q$ and $\ell \to r \in R$ a fresh variable
  \begin{align*}
    &v_{\rmap,q,\ell \to r} && \text{with the intended meaning: $v_{\rmap,q,\ell \to r}$ is true $\iff$ $\ell \to r \in \rmap(q)$\;.}
  \end{align*}
  The conditions~\ref{improved:monotonicity},~\ref{improved:model} and~\ref{improved:reducible}
  of Theorem~\ref{thm:sn:improved}
  can be encoded as follows:
  \begin{enumerate}[label=(\alph*)]
    \item 
      For condition~\ref{improved:monotonicity} we proceed as follows. 
      We ensure that $\le$ is a quasi-order using the following formulas
      \begin{align*}
        &v_{\le,q,q} && \text{for every $q \in Q$} && \text{(reflexivity)}\\
        &v_{\le,q_1,q_2} \wedge v_{\le,q_2,q_3} \to v_{\le,q_1,q_3} && \text{for every $q_1,q_2,q_3 \in Q$} && \text{(transitivity)}
      \end{align*}
      For monotonicity of the automaton with respect to $\le$ we introduce fresh variables
      \begin{align*}
        m_{f,q_1,\ldots,q_{\arity{f}},q,q'}
      \end{align*}
      and formulas
      \begin{align*}
        m_{f,q_1,\ldots,q_{\arity{f}},q,q'} \;\to\; v_{\le,q,q'} \wedge v_{f,q_1,\ldots,q_{\arity{f}},q'}
      \end{align*}
      for every $f \in \Sigma$ and $q_1,\ldots,q_{\arity{f}},q,q' \in Q$.
      Then monotonicity translates to formulas
      \begin{align*}
        &v_{F,q} \wedge v_{\le,q,q'} \to v_{F,q'}
      \end{align*}
      for every $q,q' \in Q$, and
      \begin{align*}
        &v_{f,q_1,\ldots,q_{\arity{f}},q} \wedge v_{\le,q_i,q'_i} \;\to\; \bigvee_{q' \in Q} m_{f,q_1,\ldots,q_{i-1},q_i',q_{i+1},\ldots,q_{\arity{f}},q,q'}
      \end{align*}
      for every $q_1,\ldots,q_{\arity{f}},q \in Q$, $i \in \{1,\ldots,\arity{f}\}$ and $q_i' \in Q$.
    \medskip
    
    \item 
      For condition~\ref{improved:model} we adapt Remark~\ref{sat:closed}.
      For every $t \in U$, assignment $\alpha : \varsof{t} \to Q$ and $q \in Q$,
      we have fresh variables
      \begin{align*}
        v_{t,\alpha,q} && \text{with the intended meaning: $v_{t,\alpha,q}$ is true $\iff$ $t\alpha \ared^* q$\;.}
      \end{align*}
      with the corresponding formulas (ensuring the intended meaning) as in Remark~\ref{sat:closed}.
      For every $\ell \to r \in R$, $\alpha : \varsof{\ell} \to Q$ and $q,q' \in Q$,
      we introduce a fresh variable $m_{r,\alpha,q,q'}$ and formula
      \begin{align*}
        &m_{r,\alpha,q,q'} \;\to\; v_{\le,q,q'} \wedge v_{r,\alpha,q'} \;.
      \end{align*}
      Likewise, for every $\ell \to r \in R$, $r' \trianglelefteq r$, $\alpha : \varsof{r'} \to Q$ and $q \in Q$,
      we add a fresh variable $m_{F,r',\alpha,q}$ and formula
      \begin{align*}
        &m_{F,r',\alpha,q} \;\to\; v_{F,q} \wedge v_{r,\alpha,q}
      \end{align*}
      Then condition~\ref{improved:model} translates to
      \begin{align*}
        &v_{\ell,\alpha,q} \wedge v_{\rmap,q,\ell \to r}
            \;\to\; \big( \bigvee_{q' \in Q} m_{r,\alpha,q,q'} \big) \vee 
            \big( \bigvee_{r' \trianglelefteq r,\; q\in Q} m_{F,r',\alpha,q} \big)
      \end{align*}
      for every $\ell \to r \in R$, $\alpha : \varsof{\ell} \to Q$ and $q \in Q$.

    \medskip
    
    \item
      For condition~\ref{improved:reducible} we extend Remark~\ref{sat:intersection}, as follows.
      Let $\ell_1,\ldots,\ell_n$ be the left-hand sides of rules in $R$.
      Let $B$ be the automaton and $F_{\ell_1},\ldots,F_{\ell_n}$ the sets of states obtained from Lemma~\ref{lem:redex:automaton}.
      We construct the product automaton $A \product B$,
      and then we compute those states that are reachable without passing states $(q,q')$ for which there exists $\ell \to r \in \rmap(q)$
      such that $q' \in F_\ell$ (that is, the rule $\ell \to r$ is selected by $A$ and $B$ confirms that the term is an instance of $\ell$).
      We adapt the encoding of Remark~\ref{sat:intersection} as follows:
      \begin{enumerate}[label=(\roman*)]
        \item for every $f\in\Sigma$ with arity $n$ and $(q_1,q'_1),\ldots,(q_n,q'_n),(q,q') \in Q\times Q'$:
          \begin{align*}
            \big(\;\;v_{A,f,q_1,\ldots,q_n,q} &\;\wedge\; v_{B,f,q'_1,\ldots,q'_n,q'} 
             \;\wedge\; p_{(q_1,q_1')} \;\wedge\; p_{(q_2,q_2')} \;\wedge\; \ldots \;\wedge\; p_{(q_n,q_n')} \\
             &\;\wedge\; \framebox{$\displaystyle \bigwedge_{\ell\to r \in R,\; q' \in F_{\ell}} \neg v_{\rmap,q,\ell \to r}$}
             \;\;\big) \;\to\; p_{(q,q')}
          \end{align*}
      
        \item for every $(q,q') \in Q\times Q'$: $p_{(q,q')} \to \neg v_{A,F,q}$.
          \smallskip
      \end{enumerate}
      In (i), due to the added condition, we do not consider all reachable states $(q,q')$ 
      but only those for which there exists no rule $\ell \to r \in R$ 
      that is applicable at the root of the current subterm ($q' \in F_\ell$)
      and that is activated ($v_{\rmap,q,\ell \to r}$).
      The formulas in (ii) guarantee that no state $(q,q')$, that is reachable in this sense,
      is accepted by $A$.
  \end{enumerate}
\end{encoding}

\begin{encoding}[Incompleteness with respect to Loops]\label{rem:loops}
  We note there exist (left-linear) TRSs systems that admit loops but
  cannot be proven non-terminating using Theorem~\ref{thm:sn:improved}.
  For example, consider the following rewrite system:
  \begin{align*}
    a &\to b & f(a,b) &\to g(a) & g(x) \to f(x,x)
  \end{align*}
  This system admits the loop $g(a) \to f(a,a) \to f(a,b) \to g(a)$.
  However, it is not possible to prove non-termination of this system by Theorem~\ref{thm:sn:improved}.
  We sketch the argument.
  Assume that there was a tree automaton $A$ fulfilling the conditions of the theorem.
  Then the automaton admits the rewrite step $g(a) \to f(a,a)$.
  Since both $a$'s in $f(a,a)$ are copies of the $a$ in $g(a)$,
  the automaton `interprets' both occurrences of $a$ as the same state.
  Now, to obtain an infinite rewrite sequence, the rule $a \to b$ must be activated for the right $a$ in $f(a,a)$.
  However, then this rule is also activated for the left $a$,
  but contracting the left $a$ leads to a normal form.
  
\end{encoding}

\section{Experimental Results}
\label{secres}

We have implemented the improved method for disproving strong normalization (Theorem~\ref{thm:sn:improved})
presented in this paper.
For the purpose of evaluating our techniques,
the tool applies only the methods presented in this paper,
and no other non-termination method like loop checks.
The SAT solver employed for the evaluation results in this section is MiniSat~\cite{minisat:05}.
Our tool can be downloaded from~\url{http://joerg.endrullis.de/non-termination/}.

Our tool can automatically prove non-termination of all examples in this paper, including the S-rule and the $\delta$-rule.
The following table shows the size of the automata that are found by the tool as witnesses for non-termination for the examples in our paper:
\begin{center}
  \begin{tabular}{|c|c|c|c|c|c|c|c|}
    \hline
    Example & \ref{ex:lr} & \ref{ex:S} & \ref{ex:delta} & \ref{ex:lr:aaa} & \ref{ex:lr:even} & \ref{ex:lr:improved} \\
    \hline
    Number of states & 4 & 5 & 3 & 4 & 5 & 6\\
    \hline
  \end{tabular}
\end{center}
Each of these automata has been found within less than a second on a dual core laptop.

We have also evaluated our methods on the database used in~\cite{emme:enge:gies:2012},
consisting of 58 non-terminating term rewriting systems that do not admit loops.
The tool AProVE recognizes 44 systems as non-terminating; an impressive 76\%.
An extension of AProVE with our method would increase the recognition by 8.5\% to 84.5\%.
In other words, our method succeeds on 36\% (that is 5 systems) 
of the remaining 14 systems for which AProVE did not find a proof.
These 5 systems are:
\begin{itemize}
  \item \texttt{nonloop/TRS/emmes/ex3\_4.trs}
  \item \texttt{nonloop/TRS/own/challenge\_fab.trs}
  \item \texttt{nonloop/TRS/own/downfrom.trs}
  \item \texttt{nonloop/TRS/own/ex\_payet.trs}
  \item \texttt{nonloop/TRS/own/isList-List.trs}
\end{itemize}
In total, our tool succeeds for 26 of the 58 non-looping examples from~\cite{emme:enge:gies:2012}.
The results suggest that our method and that of~\cite{emme:enge:gies:2012}
are complementary and should be combined for maximum strength.
The paper~\cite{emme:enge:gies:2012} explicitly mentions that the following
example is beyond their techniques (this example is not part of the database above):
\newcommand{\true}{\mathrm{true}}
\newcommand{\successor}{\mathrm{s}}
\newcommand{\isNat}{\mathit{isNat}}
\newcommand{\double}{\mathit{double}}
\begin{align*}
  f(\true,\true,x,\successor(y)) &\to f(\isNat(x),\isNat(y),\successor(x),\double(\successor(y))) \\
  \isNat(0) &\to \true \\
  \isNat(\successor(x)) &\to \isNat(x) \\
  \double(0) &\to 0 \\
  \double(\successor(x)) &\to \successor(\successor(\double(x)))
\end{align*}
Our non-termination techniques can handle this system: 
the tool finds an automaton with $6$ states within 3 seconds
(using the transformation from Remark~\ref{rem:complexity}).

Finally, we have evaluated the tool on the termination problem database (TPDB).
We have run our tool on all string and term rewriting systems (of the standard categories) 
that remained unsolved during the last full run of all tools in December 2013.
For string rewriting, our tool was able to disprove termination for 13,
and for term rewriting, for 8 systems of the unsolved systems.
This corresponds to an increase of strength of 11.5\% (114 + 13) for string rewriting and 
of 3\% (274 + 8) for term rewriting.
Let us mention that many of the 13 string rewriting systems
actually admit loops, but very complicated ones, that are not found by the standard tools.
These loops have been found in previous competitions by the tools Matchbox~\cite{wald:2004}
and Knocked for Loops~\cite{kfl:2010}.

\section{Conclusions and Future Work}
\label{secconc}

In this paper, we have employed regular languages for proving non-termination.
Instead of searching for an infinite reduction explicitly we search for a regular language
with properties from which non-termination easily follows. After encoding these properties in a propositional formula,
the actual search is done by a SAT solver. In some examples, like
Example \ref{ex:lr:aaa}, a very simple corresponding regular language is quickly found by our 
approach, while the actual infinite reductions have a non-linear pattern being beyond earlier approaches.

For future work, it is interesting to investigate whether this approach can be
extended to context-free (tree) languages;
such an approach could potentially also generalize~\cite{emme:enge:gies:2012}.
The question is whether there are efficient criteria to check the conditions of Theorem~\ref{thm:sn}. 
For example, consider the following string rewriting system:
\begin{align*}
  && bB &\to Bb &
  bcd &\to BcD & 
  Dd &\to dD \\
  aX &\to abb &
  BX &\to Xb &
  bcd &\to XcY & 
  YD &\to dY &
  Ye &\to dde &
\end{align*}
This system admits for every $n > 1$ reductions of the form
\begin{align*}
  &a\,b^n\,c\,d^n\,e 
  \to^* a\,B^{n-1}\,bcd\,D^{n-1}\,e 
  \to^* a\,B^{n-1}\,XcY\,D^{n-1}\,e 
  \to^* a\,b^{n+1}\,c\,d^{n+1}\,e
\end{align*}
As a description of this pattern needs a context-free language, it is unlikely that a regular language exists 
that fulfills the requirements of Theorem~\ref{thm:sn}. 

As described in Remark~\ref{rem:complexity},
the SAT encoding of (non-deterministic) automata is not efficient for symbols of higher arity.
We think that these problems can be overcome by more efficient encodings of automata.
For example, the uncurrying transformation mentioned in Remark~\ref{rem:complexity} can be seen 
as a restriction of the shape of the automata (the transition is computed argument by argument)
instead of a transformation on the system.
It would be interesting to investigate what other restrictions 
would lead to a more efficient representation of automata as Boolean satisfiability problems.
Results in this direction can be of interest in various areas where
automata are applied.

We think that it is also interesting to investigate whether
the characterization of strong and weak normalization (Theorems~\ref{thm:sn} and~\ref{thm:wn})
can be adapted to the setting of  
infinitary rewriting~\cite{endr:hend:klop:2012,endr:polo:2012b,endr:hans:hend:polo:silv:2015}
with infinite terms and ordinal-length reductions;
the interesting properties then are infinitary strong and weak normalization.

Finally, we note that equality of streams~\cite{endr:hend:bakh:2012,endr:hend:bakh:rosu:2014,zant:endr:2011,endr:hend:bodi:2013} 
(infinite sequences of symbols) can be rendered as a non-termination problem
(a comparison program running indefinitely if the streams are equal, and terminating as soon as a difference is found).
It remains to be investigated whether non-termination techniques can 
be employed fruitfully for proving stream equality.

\bibliography{main.bib}

\end{document}